\let\classAND\AND
\let\AND\relax
\let\AND\classAND
\newtheorem{scheme}{Scheme}
\newenvironment{proof}{\textbf{Proof:}}{\hfill  $\blacksquare$}
\newcommand{\tabincell}[2]{\begin{tabular}{@{}#1@{}}#2\end{tabular}}  
\begin{document}

\begin{frontmatter}

\title{Secure Consensus with Distributed Detection \\ via Two-hop Communication\thanksref{footnoteinfo}} 

\thanks[footnoteinfo]{This work was supported in part by the JST 
	CREST Grant No.~JPMJCR15K3 and by JSPS under Grant-in-Aid for 
	Scientific Research Grant No.~18H01460.
    The material in this paper was partially presented at the 8th IFAC Workshop on Distributed Estimation and Control in Networked Systems (NecSys2019), September 16-17, 2019, Chicago, IL, USA.}

\author[Tokyo]{Liwei Yuan}\ead{yuan@sc.dis.titech.ac.jp},    
\author[Tokyo]{Hideaki Ishii}\ead{ishii@c.titech.ac.jp}               

\address[Tokyo]{Department of Computer Science, Tokyo Institute of Technology, 4259-J2-54, Yokohama 226-8502, Japan}

\begin{keyword}                           
Resilient consensus; Cyber security; Distributed detection; Multi-hop communication.            
\end{keyword}                             

\begin{abstract}                          
In this paper, we consider a multi-agent resilient consensus problem,
where some of the nodes may behave maliciously. The approach
is to equip all nodes with a scheme to detect neighboring
nodes when they behave in an abnormal fashion. To this end,
the nodes exchange not only their own states but also information
regarding their neighbor nodes. Such two-hop communication has
long been studied in fault-tolerant algorithms in computer science. 
We propose two distributed schemes for detection
of malicious nodes and resilient consensus with different requirements
on resources for communication and the structures of the networks.
In particular, the detection schemes become effective under certain
connectivity properties in the network so that the non-malicious nodes
can share enough information about their neighbors. It is shown that
the requirements are however less stringent than those for conventional  
algorithms.
A numerical example is presented to demonstrate the performance of the  
proposed methods in wireless sensor networks.

\end{abstract}

\end{frontmatter}

\section{Introduction}

Recently, studies on cyber security issues of networked systems have 
gained much attention. With large-scale implementations of distributed, networked
applications such as clock synchronization (\cite{kadowaki2014event}), energy management (\cite{yang2013consensus}), 
formation control (\cite{dimarogonas2010stability}) and so on, consensus problems in the presence of adversary agents creating failures and attacks have become crucial; see, e.g.,
\cite{dibaji2015consensus, dibaji2018resilient, leblanc2013resilient, pasqualetti2012consensus}. 
Misbehaving nodes considered in existing 
works can be characterized by three adversary models according to the 
scopes of threat levels: Non-colluding/crash, malicious, and Byzantine 
models. Crash agents may cause random node failures, stopping the normal functionalities and not responding to other agents. In the malicious and Byzantine 
cases, misbehaving agents are capable to manipulate their data 
arbitrarily and may even collaborate with each other. 
Malicious nodes are limited as they must broadcast the same messages 
to all of their neighbors, while Byzantine nodes are more adversarial, being capable to send 
different messages (e.g., \cite{chandra1996unreliable, lamport1982byzantine, Lynch, teixeira2012attack}).

In the systems control literature, resilient consensus in 
the case of malicious model has been studied widely (\cite{dibaji2017resilient, leblanc2013resilient}). 
This model is suitable for multi-agent applications such as wireless
sensor networks and autonomous robotic networks, where the information
exchange among the nodes is via broadcasting and sensing.
Fault tolerant techniques using the so-called mean subsequence 
reduced (MSR) type algorithms have been found effective there and have also been used in computer science (e.g., \cite{azadmanesh2002asynchronous, mendes2015multidimensional, vaidya2012iterative}) and robotics (e.g., \cite{guerrero2017formations, park2017fault, saldana2017resilient}).
There, under the assumption that a bound on the total number of malicious nodes 
is known, each node eliminates neighbors taking extreme values 
according to that number at each iteration. This method is simple and suitable for 
distributed implementation. 
However, it requires the network to be relatively dense and complex.
This is partly because MSR algorithms do not have the functionality to 
detect the adversaries. 
In particular, an explicit characterization expressed by the notion of robustness of graphs has been obtained. 
Further extensions to more sparse graphs using additional trusted nodes or heterogeneous nodes have been made in, e.g., \cite{abbas2017improving, mitra2018impact, zhao2019resilient}.

On the other hand, from the security viewpoints, it is desirable to equip the
nodes with distributed algorithms for fault detection and identification (FDI).
For consensus-type problems, FDI techniques based on unknown input observers 
have been proposed in, e.g., \cite{pasqualetti2012consensus, shames2011distributed, sundaram2011distributed}. 
These schemes however impose strong assumptions that each agent should have 
the global knowledge of the topology of the entire network to afford complete detection and identification of malicious nodes
and, moreover, they must have sufficient computation resources to run a number of observers. These aspects may limit their scope of applicability in practice.

The main objective of this paper is
to address the problem of designing fully distributed
FDI methods requiring only local information by the agents.
In this setting,
each non-faulty, normal agent in the network acts as a detector,
monitoring its neighbors by iteratively exchanging
more information than in conventional consensus.
Specifically, the agents not only send their own
states but also relay their neighbors' state values.
In this way, they can
verify if the states sent by a particular neighbor
are consistent with those of others.
Such a technique is known as two-hop communication and
is commonly used in computer science
for solving the iterative approximate Byzantine consensus (IABC) problem (\cite{fischer1986easy, Lynch, sakavalas2018effects, su2017reaching}).
By introducing multi-hop communication in MSR algorithms, the authors of \cite{su2017reaching} solve the IABC problem with a weaker condition on graph structures compared to that derived under the single-hop communication model (\cite{vaidya2012iterative}).
However, the condition in \cite{su2017reaching} is computationally hard to check and is not intuitive.\footnote[1]{
	In this paper, the topology of the multi-agent system refers to the network structure representing the direct communication among the agents over, e.g., wireless channels.  On the other hand, the term \textit{effective} topology refers to the network pattern where edges represent the availability of the information of the connected agents. Hence, the effective topology of a network with two-hop communication is in general denser than the original topology. In an adversarial environment, however, the two-hop information requires additional verification of the information received. Thus, comparison of effective topologies of different algorithms can be difficult.}

For the malicious adversary model,
there are recent studies based on two-hop
communication for distributed detection in similar
problem settings, but we stress that often strong assumptions are
imposed. 
In \cite{guo2012distributed}, faulty nodes are restricted in that
they must send the true values of their neighbors,
i.e., faulty nodes cannot lie about their neighbors.
In \cite{he2013sats, zhao2014secure}, the adversarial nodes cannot be neighbors
and thus cannot collude with each other.
In \cite{zhao2018resilient}, mobile agents are introduced, randomly visiting agents so as to eventually enable them to share global information;
they carry out a certain portion of
the workload for detection.
Also related works in the area of robotics include \cite{fagiolini2009dynamic}.

We propose two schemes for distributed fault detection and resilient
consensus, which are capable to detect malicious nodes
in the network when they misbehave.
In our detection framework, every node acts as a
local monitor for their neighbors' behaviors
during the iterations of the consensus process.
It becomes crucial for each agent $i$ to
have the information about the updates of its neighbor $j$.
Such information consists of the values of agent $j$'s
neighbors, which are the two-hop neighbors of agent $i$.
In the course, we exploit the property in the malicious model that
the adversarial nodes are restricted to send the same information to its neighbors.

The key to achieve distributed detection is
to impose the network to have sufficient connectivity.
In particular, for the agents to receive trustable information
regarding its one/two-hop neighbors,
it is critical that there are multiple ways to have access
to the neighbors through the presence of
common neighbors and multiple paths.
By both schemes, we can further achieve consensus
among the non-faulty ones in a resilient manner.
We clarify tight conditions on the network
structures in terms of the graph connectivity
for the proposed algorithms.
The first scheme has a simple structure
as the normal agents can detect malicious neighbors
but must rely on a secure mobile agent to communicate the
detection information to others. 
On the other hand, the second scheme can perform fully
distributed detection.
The major difference in their requirements lies in
the necessary network structures. The first scheme
functions on networks with less connectivity than the
second one; this is because in the latter scheme, 
majority voting (\cite{parhami1994voting}) is used for agents to decide
the true values of the two-hop neighbors. We will
see however that the required connectivity level can
be less than that in MSR-based resilient consensus algorithms.
Moreover, the proposed algorithms can function properly when more than half of the nodes turn malicious under certain topologies, which is a case out of the capabilities of conventional algorithms.


The rest of this paper is organized as follows. 
In Section~2, preliminaries on graphs and the system model are introduced. Section~3 is 
devoted to the basics of the distributed detection framework with detection share. 
In Section~4, we present our main algorithm being capable of fully distributed detection of adversaries.
In both cases, we provide necessary and sufficient graph conditions for the detection schemes.
In Section~5, numerical examples are provided to illustrate 
the effectiveness of the proposed schemes. 
We conclude the paper in Section~6.
A preliminary version of this paper appeared as \cite{yuan2019}. The current paper contains all the proofs, further discussions, and simulation results.

\section{Preliminaries and Problem Setting}


\subsection{Graph Notions}
Consider the directed graph $\mathcal{G} = (\mathcal{V},\mathcal{E})$ consisting of the node set $\mathcal{V}=\{1,...,n\}$ and the edge set $\mathcal{E}\subset \mathcal{V} \times \mathcal{V}$. Here, the edge $(j,i)\in \mathcal{E}$ indicates that node $i$ can get information from node $j$. Node $j$ is said to be an in-neighbor of node $i$, and node $i$ is an out-neighbor of node $j$. The sets of in-neighbors and out-neighbors of node $i$ are denoted by $\mathcal{N}_i=\{j\in \mathcal{V}:\, (j,i)\in \mathcal{E} \}$ 
and $\mathcal{N}_i^{\textup{out}}=\{j\in \mathcal{V}:\, (i,j)\in \mathcal{E} \}$, respectively. 
The in-degree of node $i$ is given by $d_i=\left| \mathcal{N}_i\right| $. Here, $\left| \mathcal{S}\right| $ is the cardinality of a finite set $\mathcal{S}$. 
In undirected graphs, the edge $(j,i)\in \mathcal{E}$ indicates $(i,j)\in \mathcal{E}$.
A complete graph denoted by $\mathcal{K}_n= (\mathcal{V},\mathcal{E})$ is defined by $\mathcal{E} = \{(i, j)\in \mathcal{V} \times \mathcal{V} : i \neq j\}$.
A path from node $i_1$ to $i_m$ is a sequence of distinct nodes $(i_1, i_2, \dots, i_m)$, where $(i_j, i_{j+1})\in \mathcal{E} $ for $j=1, \dots, m-1$. This path is also referred to an $(m-1)$-hop path. We also say that node $i_m$ is reachable from node $i_1$.

To characterize topological properties of networks, we provide two notions here: Graph connectivity and graph robustness.
A graph $\mathcal{G}$ is said to be (strongly) connected if every node is reachable from every other node. 
A graph $\mathcal{G}$ is said to be $k$-(node) connected if it contains at least $k+1$ nodes, and does not contain a set of $k-1$ nodes whose removal disconnects the graph; and $\kappa(\mathcal{G})$ is defined as the largest $k$ such that $\mathcal{G}$ is $k$-connected. 
In particular, for the complete graph $\mathcal{K}_n$, it holds $\kappa(\mathcal{K}_n)=n-1$.
Similarly, we introduce a connectivity notion for directed graphs here. A directed graph $\mathcal{G}$ is said to have $k$-(node) connected rooted spanning trees if it contains at least $k+1$ nodes, and does not contain a set of $k-1$ nodes whose removal renders that the digraph does not have any rooted spanning tree.

In the context of resilient consensus, graph robustness
is another important notion (\cite{leblanc2013resilient}). To verify robustness of a given graph is computationally difficult since it involves combinatorial procedures.

\begin{defn} A directed graph $\mathcal{G} = (\mathcal{V},\mathcal{E})$ is said to be $(r,s)$-robust if for every pair of nonempty disjoint subsets $\mathcal{V}_1,\mathcal{V}_2\subset \mathcal{V}$, at least one of the following conditions holds:
	(i) $\mathcal{X}_{\mathcal{V}_1}^r=\mathcal{V}_1$;   (ii) $\mathcal{X}_{\mathcal{V}_2}^r=\mathcal{V}_2$;   (iii) $\left| \mathcal{X}_{\mathcal{V}_1}^r\right| +\left| \mathcal{X}_{\mathcal{V}_2}^r\right| \geq s$;
	\noindent where $\mathcal{X}_{\mathcal{V}_a}^r$ is the set of nodes in $\mathcal{V}_a$ having at least $r$ incoming edges from outside $\mathcal{V}_a$. As the special case with $s=1$, graphs which are $(r,1)$-robust are called $r$-robust.
\end{defn}

\subsection{Update Rule and Threat Model}\label{problemsetting}

Consider a time-invariant network modeled by the directed graph $\mathcal{G} = (\mathcal{V},\mathcal{E})$.
The node set $\mathcal{V}$ is partitioned into the set of normal nodes $\mathcal{N}$ and the set of adversary nodes $\mathcal{A}$. The latter set is unknown to the normal nodes at time step $k=0$. The adversary nodes in $\mathcal{A}$ try to prevent the normal nodes in $\mathcal{N}$ from reaching consensus. 
Denote by $\mathcal{A}_i[k]$ the set of indices of the adversary nodes known to or detected by node $i$ by time step $k$. This set is updated differently in the two proposed schemes and is specified later. The set of agent $i$'s neighbors not behaving adversarially is denoted by $\mathcal{M}_i[k]=\mathcal{N}_i \setminus \mathcal{A}_i[k] $.
All algorithms in this paper are synchronous. Each normal node $i$ updates its state value $x_i[k]$ by 
\begin{equation}
x_i[k+1]=\sum_{j\in \mathcal{M}_i^+[k]} \omega_{ij}[k]x_j[k],  \label{updaterule}
\end{equation}
where $\mathcal{M}_i^+[k]=\{i\}\cup \mathcal{M}_i[k]$ and $\omega_{ij}[k]=1/(1+\left| \mathcal{M}_i[k]\right| )$. 

Next, we introduce the threat model (\cite{leblanc2013resilient}) studied here.
\begin{defn}
	\textit{($f$-total / $f$-local set)}
	The set of adversary nodes $\mathcal{A}$ is said to be $f$-total
	if it contains at most $f$ nodes, i.e., $\left| \mathcal{A}\right| \leq f$.
	Similarly, it is said to be $f$-local
	if for any normal node $i\in \mathcal{N}$, it has at most $f$ adversary nodes as its in-neighbors, i.e., $\left|\mathcal{N}_i \cap \mathcal{A}\right| \leq f, \forall i \in \mathcal{N}$.
\end{defn}

\begin{defn}
	\textit{(Malicious nodes)}
	An adversary node $i\in \mathcal{A}$ is said to be a malicious node
	if it can change its own value arbitrarily,\footnote[2]{
		Here a malicious node can decide not to send any value. This corresponds to the omissive/crash model \cite{Lynch}.} but sends the same 
	value to its neighbors at each transmission.
\end{defn}

In this paper, we focus on the malicious model. This model
is reasonable in applications such as wireless sensor networks
and robotic networks, where neighbors' information is obtained
by broadcast communication or vision sensors. This class of adversaries
has not been well studied in computer science where the Byzantine model
is traditionally more common (\cite{Lynch}). 
Specifically, a Byzantine node can send different values to its different neighbors.
We now introduce the type of consensus among the normal
agents to be sought in our paper. 
Here, we use the notation $x_i^{(i)}[k]$ to indicate the state $x_i[k]$ of each normal agent $i$ stored by itself. Its formal definition is given in Section \ref{infoset}.

\begin{defn}
	If for any possible sets and behaviors of the
	malicious agents and any state values of the normal
	nodes, the following two conditions are satisfied,
	then we say that the normal agents reach 
	resilient consensus:

	1. Safety condition: All normal states remain in the interval
	$\mathcal{S}=[\min_{i\in\overline{\mathcal{V}}}x_i^{(i)}[0],\max_{i\in\overline{\mathcal{V}}}x_i^{(i)}[0]]$, where $\overline{\mathcal{V}}=\{i\in \mathcal{V}:\, x_i^{(i)}[0]\in [\overline{x}_{\min}, \overline{x}_{\max}] \}$
	determined by the initial states of all agents: $x_i^{(i)}[k] \in \mathcal{S}, \forall i \in \mathcal{N}, k \in Z_+$.

	2. Consensus condition: There exists a state $x^*\in \mathcal{S}$
	such that $\lim_{k\to \infty}x_i^{(i)}[k]=x^*,  \forall i\in \mathcal{N}$.
	
\end{defn}

Note that for our algorithms it is hard to detect adversary nodes which take extreme initial values but perform the consensus like normal nodes.
In fact, it is hard for any algorithm to detect such nodes (\cite{fagiolini2009dynamic, guo2012distributed, he2013sats, zhao2018resilient, zhao2014secure}).
Since one can never know that such a node is normal with an extreme initial value or it is simply adversarial. 
To mitigate the impact of such adversary nodes, we set the safety interval $[\overline{x}_{\min}, \overline{x}_{\max}]$ for normal nodes so that neighbors taking values outside this interval will be considered malicious.
Here, the safety condition is different from those in MSR-based works (\cite{azadmanesh2002asynchronous, dibaji2017resilient, leblanc2013resilient, mendes2015multidimensional}), where the interval $\mathcal{S}$ is set only by the initial values of the normal agents.


In this paper, we develop two distributed schemes for
achieving both detection of malicious agents and
resilient consensus under the $f$-total model.
A common characteristic of these algorithms is that they both employ two-hop communication, where each
agent transmits its own state and
the states of its neighbors. 
We will clarify the necessary network structure for
the schemes to accomplish this goal. In particular,
our schemes require less connectivity for the networks
in comparison with the conventional MSR-based algorithms.

The two proposed schemes are dealt with in Sections \ref{Secfors1}
and \ref{Secfors2}. They differ in two aspects
related to communication and network connectivity:
The first scheme is introduced more for illustrating the idea behind adversary detection via two-hop communication;
it uses a secure mobile detector to verify the detection report and send the detection information to all nodes, but the network can be more sparse.
The second scheme is the main algorithm of this paper, being capable of fully distributed
detection under more dense networks.

Before we proceed, we explain more about 
the class of MSR-based algorithms. In such algorithms,
the normal agents remove the extreme state values
at each iteration. Specifically they may remove up to $f$
largest values and $f$ smallest values. This indicates that 
the network must be at least $(2f+1)$-connected.
However, it is known that more connectivity is needed. 
In general, under the $f$-total model, resilient consensus can be
reached by MSR-based algorithms if and only if
the underlying graph is $(f +1, f + 1)$-robust (e.g., \cite{leblanc2013resilient}).

\subsection{Information Set for Two-hop Communication}\label{infoset}

In our detection frameworks, each normal node acts as a detector for its neighbors. To this end, the nodes update and exchange their information sets \textcolor{black}{
containing their own and neighbors' values, IDs and corresponding identities (normal or malicious). }
Specifically, at each time step, each node will check the information sets 
received from its neighbors by comparing them and also with the past information 
sets. Then, it determines whether the information set is from an adversary node. 
After confirming the identities of its neighbors, 
it will utilize the values of nodes that have behaved normally in
the update rule \eqref{updaterule}, \textcolor{black}{
and send a new information set containing its updated value and the values of all the neighbors from the previous time step along with corresponding identities.}

In the update rule \eqref{updaterule}, each normal agent uses its own state and its neighbors' states. However, to explicitly indicate the difference between the broadcast states and the manipulated states, we introduce two notations. 
For agent $i$, we denote its value by $x_i^{(i)}[k]$, to indicate that it is stored in agent $i$ itself and then broadcasted. 
Let $x_i^{(j)}[k]$ be the value stored by its neighbor node $j$ after receiving the broadcast value $x_i^{(i)}[k]$. If node $j$ is malicious, this information can be modified from $x_i^{(i)}[k]$ and take a different value when it is stored by agent $j$.

Now, for each normal node $i\in \mathcal{N}$, we rewrite the update scheme in \eqref{updaterule} using these notations as
\begin{equation}
x_i^{(i)}[k+1]=\sum_{j\in \mathcal{M}_i^+[k]} \omega_{ij}[k]x_j^{(i)}[k].
\label{eqn:x_update2}
\end{equation}
For each malicious node $i\in \mathcal{A}$, it can update its broadcast state arbitrarily as
\begin{equation}
x_i^{(i)}[k+1] = u_i[k],
\end{equation}
where $u_i[k]$ may even be a function of states of all nodes in the network by time step $k$. (See also Assumption \ref{cannotadd}.)

Both normal and malicious nodes transmit information sets to their neighbors. Specifically, the information set $\Phi_i[k]$ of node~$i$ to be sent to its neighbors at time $k$ contains the value of itself at time $k$ and past values of itself and its neighbors at time $k-1$ and corresponding identities of the neighbors and is set as
\textcolor{black}{
\begin{equation}
\Phi_i[k]=\left( (i,x_i^{(i)}[k|k]),\{(j,x_j^{(i)}[k-1|k])\}_{j\in \mathcal{N}_i},\mathcal{A}_i[k]\right).
\label{eqn:Phi}
\end{equation}
}
We use the notation $x_i^{(i)}[k-1|k]$ to indicate that it is in the set $\Phi_i[k]$ from time $k$. Note that $\Phi_i[k-1]$ and $\Phi_i[k]$ contain $x_i^{(i)}[k-1|k-1]$ and $x_i^{(i)}[k-1|k]$, and if node $i$ is malicious, these values may be different. 
Moreover, under the malicious model, all neighbors of agent $i$ receive 
the same information set from agent $i$.


\begin{figure}[t]
	\centering
	
	\subfigure[]{
		\includegraphics[width=1.0in]{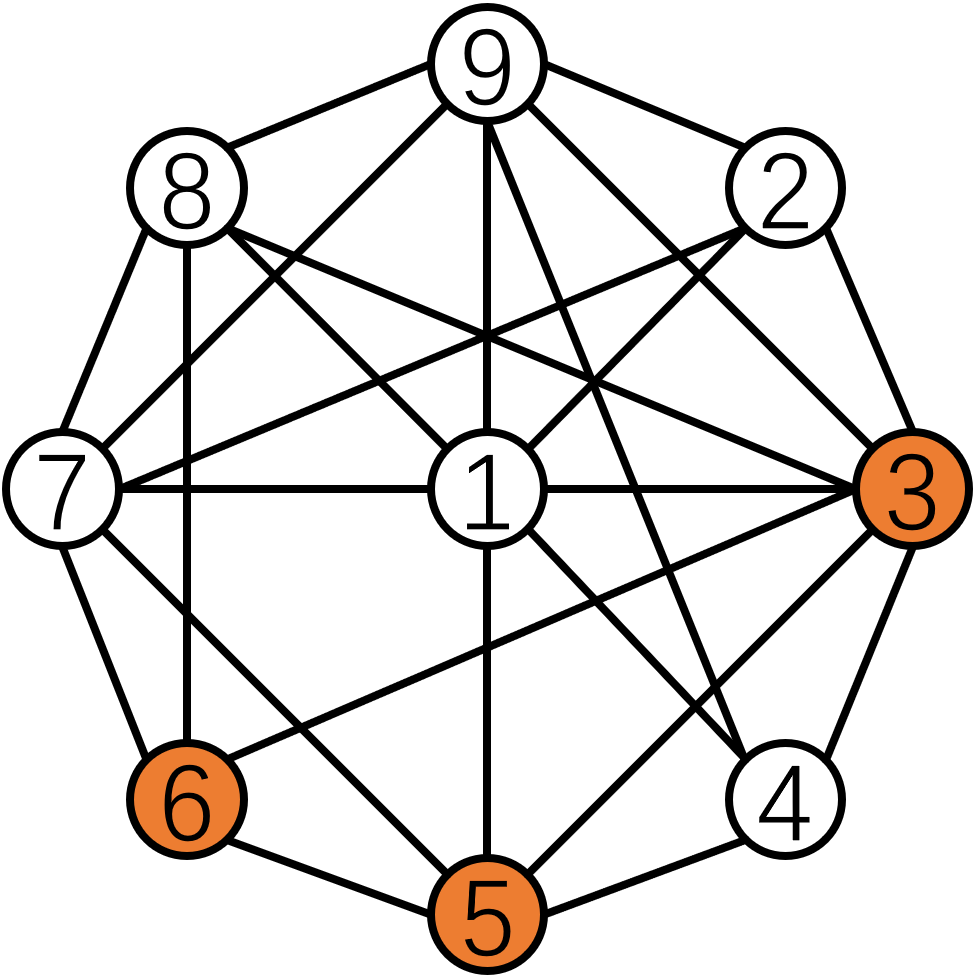}
		\label{4robust9}
	}
	\quad
	\vspace{-7pt}
	\subfigure[]{
		\includegraphics[width=1.0in]{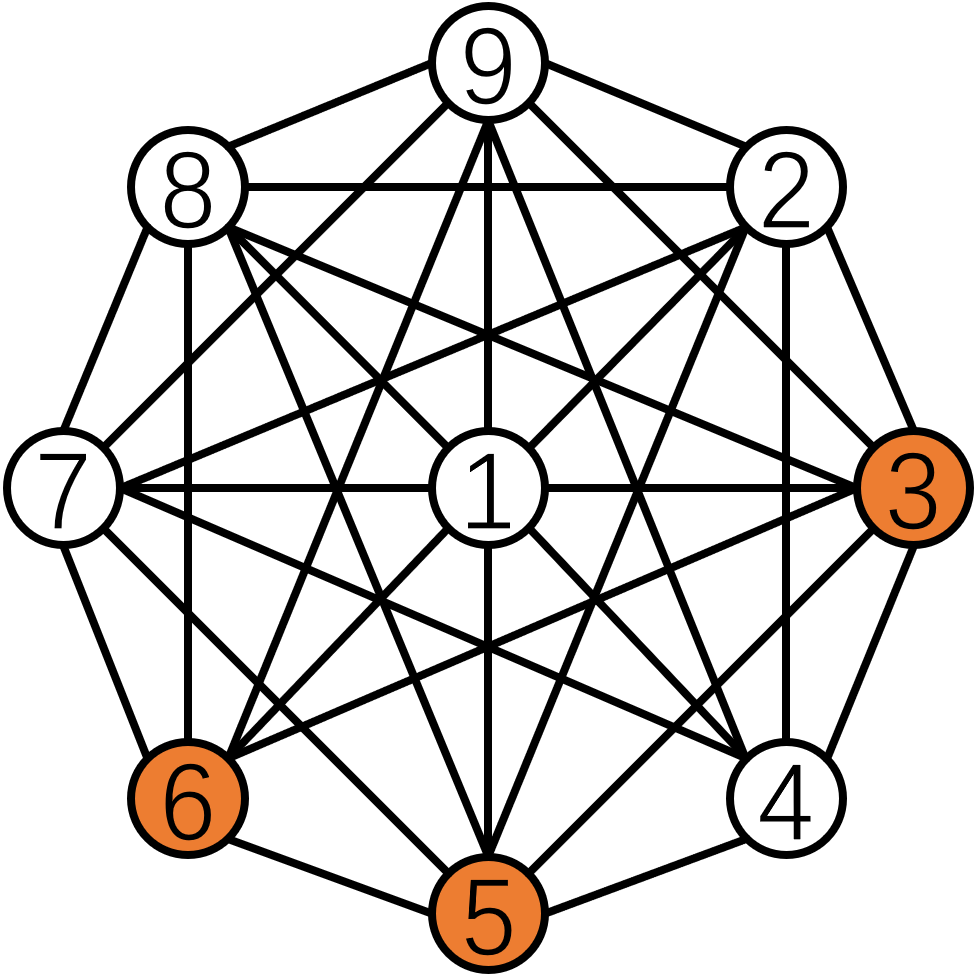}
		\label{44robust}
	}
	\vspace{-4pt}
	\caption{9-node graphs: (a) 4-connected and (b) (4,4)-robust.}
	\label{graph9}
\end{figure}

For example, consider the graph in Fig. \ref{4robust9}. Let the initial state be $x[0]=[8\ 10\ 4\ 2\ 1\ 5\ 9\ 3\ 6]^T$. The information set of node 2 at time step 1 is given by
\begin{equation}
\Phi_2[1]=\left( (2,7.4),\{(1,8), (2,10), (3,4),(7,9),(9,6)\},\mathcal{A}_2[1] \right).
\end{equation}
Here, the state of node~2 is updated by \eqref{eqn:x_update2}
as the average of the current values
with equal weights: $x_2[1] = (1/5)\sum_{j\in \mathcal{M}_2^+[0]} x_j[0]=7.4$.
This information set $\Phi_2[1]$ is sent to node~2's neighbors (i.e., nodes 1, 3, 7, and 9). After confirming that node~2 is normal through the detection algorithm, nodes 1, 3, 7, and 9 will then forward node~2's value to their neighbors at the next time step.

We introduce assumptions on the nodes' knowledge and the attacks that the malicious nodes can generate.

\begin{assum}\label{neighborinfo}
	Each normal node has access to only the information sets received from its neighbors. 
	It also has the topology information of its two-hop neighbors, i.e., the neighbors of its neighbors. 
\end{assum}

\begin{assum}\label{cannotadd}
	Each malicious node has all the information of the network (even if there is no edge from some nodes) and can manipulate its own information set in \eqref{eqn:Phi}
	before broadcasting it to the neighbors. 
	It can change the state values and IDs of its own and its neighbors.
\end{assum}

As stated in Assumption~\ref{neighborinfo}, each normal node 
has only partial knowledge about the network. 
This is actually a relaxed version of the assumption that each fault-free node knows the topology of the entire network, which is commonly made in the observer based detection works (\cite{pasqualetti2012consensus}, \cite{sundaram2011distributed}), multi-hop communication related works (\cite{Lynch}, \cite{sakavalas2018effects}, \cite{su2017reaching}), and Byzantine agreement works (\cite{tseng2015fault}). In contrast, for most of the MSR-based works (\cite{dibaji2017resilient}, \cite{leblanc2013resilient}), each fault-free node is assumed to have access to only the information from its one-hop neighbors. Thus, 
MSR-based works impose a weaker assumption than Assumption 5, however with the tradeoff of not having detection capabilities for malicious agents.
In an uncertain environment where neighbors cannot be trusted, it may not be possible to obtain accurate status regarding the two-hop neighbors. To keep the problem tractable, each node is aware of the topology up to its two-hop neighbors in this paper. This setting may be justified and of low cost as in many sensor networks, the nodes are geographically fixed and the network topology will not change.

On the other hand, a malicious node is capable 
to manipulate any value in its own information set by 
changing or deleting the value or adding some pairs of values and agent IDs.
Since the normal agents have the knowledge of the topology up to their
two-hop neighbors, attackers will be known by their direct neighbors when they do not send out their information sets,
delete values from neighbors, or add non-existing agents as neighbors.
Moreover, in the case that a malicious neighbor of some node adapts the same ID as a normal neighbor, such attacks will be detected too.

At this point, we summarize the common settings for the two schemes mentioned so far:
	(i) We deal with malicious adversary nodes (including omissive/crash model).
	(ii) The underlying network graph is time invariant.
	(iii) The update rules are synchronous.
	(iv) Both schemes are applied for scalar consensus. 

\section{Scheme 1 with Detection Share}\label{Secfors1}

In the first scheme for distributed detection and resilient consensus, the normal nodes are capable to detect malicious neighbors by using the two-hop information in undirected networks. It provides the basics for using two-hop communication in an adversarial environment, which is motivated by the works \cite{zhao2018resilient, zhao2014secure}.

\subsection{Resilient Consensus Scheme 1}
The update rule together with the detection algorithm can be outlined as follows:

\begin{scheme}
Each agent $i \in \mathcal{V}$ exchanges with its neighbors the information set $\Phi_i[k]$ in \eqref{eqn:Phi}.
Each normal agent first runs the detection algorithm in Algorithm~1. Once it detects any malicious agent in its neighbors, then the detection information is broadcasted to all agents through the secure mobile agents.
Finally, it will use the values from its normal neighbors to update its value by the update rule \eqref{updaterule}.
\end{scheme}

\begin{algorithm}[t]\footnotesize
	\caption{Detection Algorithm for Scheme 1} 
	\begin{algorithmic}
		\REQUIRE $\Phi_j[k],\Phi_j[k-1],j\in \mathcal{N}_i \cup \{i\} $
		\ENSURE IDs of the malicious neighbors
		\STATE 
		Initialization: Take the check set $\mathcal{C}_i[0]$ and malicious node set $\mathcal{A}_i[0]$ to be empty. Moreover, node $i$ receives and stores the initial states of its neighbors as $x_j^{(i)}[0]=x_j^{(j)}[0]$ for all $j\in\mathcal{N}_i$. If $x_j^{(i)}[0] \notin [\overline{x}_{\min}, \overline{x}_{\max}]$ for any $j\in\mathcal{N}_i$, node $i$ will consider node $j$ as malicious. 
		
		\STATE At each time $k$, node $i$ executes the following steps:
		
		\STATE \textcolor{black}{Let $\mathcal{A}_i[k]=\mathcal{A}_i[k-1]$. By the assumption of detection share, the malicious nodes detected at time $k-1$ is shared among all the nodes by time $k$.}
		
		\FOR{$j\in \mathcal{N}_i$}

		\IF{(Step 1) \textcolor{black}{ $\Phi_j[k]$ contains any different identities of the nodes known by node $i$ (any node in $\mathcal{A}_i[k]$ is labeled as malicious, otherwise is labeled as normal.)} } 
		\STATE it will output node $j$ as malicious. 
		\ENDIF

		\IF{(Step 2) $\Phi_j[k]$ contains IDs of neighbors of node $j$ which are different from those known to node $i$} 
		\STATE it will output node $j$ as malicious. 
		\ENDIF

		\IF{(Step 3) any value of $x_h^{(j)}[k-1|k], h \in \mathcal{N}_i$, in $\Phi_j[k]$ is not equal to the value in the check set $\mathcal{C}_i[k-1]$} 
		\STATE it will output node $j$ as malicious. 
		\ENDIF 
		
		\IF{(Step 4) $x_j^{(j)}[k|k]$ in $\Phi_j[k]$ does not follow the update rule \eqref{updaterule}} 
		\STATE it will output node $j$ as malicious. 
		\ENDIF

		\IF{(Step 5) node $j$ has not been detected as malicious by node $i$ through the 4 steps above} 
		\STATE it will output node $j$ as normal. 
		\ENDIF

		\ENDFOR
		
		\STATE Return identities (malicious or normal) of neighbors. 
		If node $i$ detects node $j$ as malicious or it receives a report on node $j$ through the detection share, it will put node $j$'s ID in $\mathcal{A}_i[k]$.
	    
		\STATE Node $i$ stores $x_j^{(j)}[k|k] $ from $ \Phi_j[k]$, $ j\in \mathcal{N}_i \cup \{i\}$, into $\mathcal{C}_i[k]$.
		
	\end{algorithmic}
\end{algorithm}

\subsection{Detection Algorithm Design}

For the first scheme, the detection share function explained below is needed for the communication among the nodes when events of detecting adversaries occur.

\begin{assum}\label{broadcast}
	Once a malicious node is detected by
	any of the normal nodes, its ID will be
	securely notified to all nodes.
\end{assum}

This type of assumptions appears in \cite{zhao2018resilient, zhao2014secure}
as well. We however stress that our
results have advantages over these works
in terms of the network structure requirements.
We will make more precise comparisons later.
In practice, for this
detection share, a certain level of resources
is necessary. \textcolor{black}{
This can be realized by introducing fault-free
mobile nodes which are appropriately distributed throughout the network and are
capable to verify if the detection reports from a node
is true or false.}
Here, we suppose that
the nodes may turn malicious over
time with the upper bound $f$ on the total
number of such nodes. Thus, the verification
must take place in real-time. Each time a node
claims to have detected a malicious neighbor,\textcolor{black}{
the mobile agent nearest to the node visits it} and verifies the evidence of the report, i.e., by collecting the information sets of the node and its neighbors of that time step.
If it finds the detection report to be valid, then it \textcolor{black}{
broadcasts the detection information to all nodes} through secure communication (\cite{lamport1982byzantine}, \cite{lindell2006composition}).
Otherwise, it broadcasts that the node sending the report is malicious.
We emphasize that these mobile agents must verify the
detection reports only when they receive from agents, 
and they need not carry out the detection of adversaries
themselves, which requires keeping track of the entire network 
all the time as in \cite{zhao2018resilient}.

We now present our distributed detection scheme in Algorithm~1. 
To ensure that all nodes follow the specified 
update rule, the normal nodes utilize the information set 
$\Phi_i[k]$ given in
\eqref{eqn:Phi} and check consistency among
the data received from their neighbors.
In Algorithm~1, step 1 is to guarantee that each normal 
node should not use the information from the nodes detected 
to be malicious by the previous time step. 
Step~2 is to prevent the malicious nodes 
from faking any neighbors.
Step~3 is to enforce the normal nodes not to
modify the values received from their neighbors. 
Finally, step~4 is to guarantee that the normal nodes 
follow the given update rule.

Our approach is distributed as
this detection scheme is implemented on each node. 
By contrast, in \cite{zhao2014secure}, a strict assumption 
on the malicious nodes is imposed so that the cooperation 
between the malicious nodes is not allowed. 
In particular, it is assumed that malicious nodes can not 
be neighbors. It is difficult to guarantee this in practice since 
clearly the identities of the malicious nodes are unknown prior to operation. Even
faults may occur simultaneously in two neighboring nodes.
In the later work (\cite{zhao2018resilient}) by the same authors, 
to relax this assumption, mobile agents are employed. Such agents
execute the fault detection and isolation (FDI) function by collecting information as
they continuously circulate within the network. 
However, for Scheme 1, the mobile agents are used only for verification of the detection reports when the events of detection occur. Later, in the next section, we will introduce another detection algorithm with the ability of fully distributed detection of adversaries.

Furthermore, a common assumption made in \cite{guo2012distributed, zhao2018resilient, zhao2014secure} is that the normal nodes form 
a connected graph. 
Although this is a necessary requirement for the normal nodes 
to achieve consensus, it is impossible to check whether 
a given graph has this property a priori even if the bound $f$ on malicious agents is known. 
In Scheme~1, we address this issue by imposing a connectivity 
condition to guarantee that the original network has a certain redundant structure.

\subsection{Necessary Graph Structure for Scheme 1}

\begin{figure}[]
	\centering
	
	\subfigure[]{
		\includegraphics[width=1.2cm]{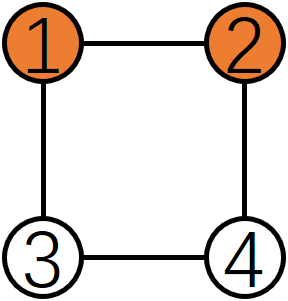}
		\label{4circle}
	}
	\quad
	\vspace{-7pt}
	\subfigure[]{
		\includegraphics[width=1.2cm]{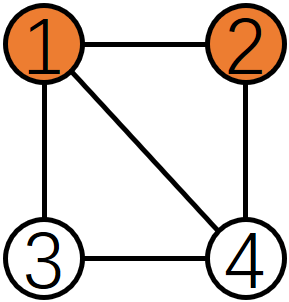}
		\label{123complete}
	}
	\vspace{-4pt}
	\caption{Example graphs: Nodes 1 and 2 are malicious. In (a), there is no common normal neighbor, while in (b), node 4 is.}
\end{figure}

Here, we introduce some conditions on the network structure
to fully utilize the detection capability of Algorithm~1. 
From the definitions of the information sets and the detection algorithm, it is clear
that a malicious node can be detected if there is at least one normal 
node among its neighbors that monitors its behavior.
However, such detection may fail if neighboring malicious 
nodes cooperate with each other. Hence,
it is critical that one or more normal nodes are present as their common neighbors.
We illustrate this point using the simple four-node network in Fig.~\ref{4circle}. 
Take nodes~1 and~2 to be malicious. 
They can cooperate as follows: Node~1 
manipulates $x_{2}^{(1)}[k-1|k]$ in its information set, 
and node~2 manipulates $x_{1}^{(2)}[k-1|k]$ in its information set. 
In this network, 
since there is no normal node having access to the information sets 
of both nodes~1 and~2, such an attack will not be detected.
Now, in the network in Fig.~\ref{123complete}, 
the normal node~4 is a common neighbor of nodes~1 and~2. 
As it has access to both $x_{1}^{(1)}[k-1|k-1]$ and $x_{2}^{(2)}[k-1|k-1]$, 
it can detect when node~1 or~2 
changes the value of the other.

The following lemma formally states this requirement and its proof can be found in Appendix A. 

\begin{lem}\label{lemma1} 
	Consider the network of nodes modeled by the undirected graph $\mathcal{G}=(\mathcal{V},\mathcal{E})$. 
	Algorithm~1 detects every pair of neighboring misbehaving nodes if and only if they have at least one normal node as their common neighbor.
\end{lem}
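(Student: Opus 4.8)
The plan is to prove the two implications of the equivalence separately, exploiting precisely which quantity each of the five steps of Algorithm~1 cross‑checks. For the \emph{necessity} part (``detects $\Rightarrow$ common normal neighbor'') I would argue by contraposition: assuming the two adjacent misbehaving nodes $a$ and $b$ have no normal node as a common neighbor, I exhibit the coordinated attack sketched around Fig.~\ref{4circle}. Up to some time $k-1$ both nodes behave exactly as normal nodes; at time $k$, node $a$ replaces the single entry $x_b^{(a)}[k-1|k]$ of $\Phi_a[k]$ by a falsified value $v_b\neq x_b^{(b)}[k-1|k-1]$ (taken inside the current range of the normal states), keeps every other entry truthful, and recomputes its own broadcast value $x_a^{(a)}[k|k]$ from the falsified data via the update rule; node $b$ does the symmetric thing with $x_a^{(b)}[k-1|k]$. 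The point is then that, for any \emph{normal} in‑neighbor $i$ of $a$, Steps~1, 2 and~4 pass (correct IDs and identity labels, and $\Phi_a[k]$ internally consistent), while Step~3 compares only the entries $x_h^{(a)}[k-1|k]$ with $h\in\mathcal{N}_i$ against $\mathcal{C}_i[k-1]$; the unique corrupted entry is the one for $h=b$, which $i$ examines only if $b\in\mathcal{N}_i$, i.e.\ only if $i$ is a common neighbor of $a$ and $b$ — excluded, since $i$ is normal. Hence no normal node flags $a$, symmetrically none flags $b$, and the pair escapes detection.

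For the \emph{sufficiency} part I would fix a normal node $c\in\mathcal{N}$ adjacent to both $a$ and $b$ and show the pair is detected. By the observations preceding the lemma, any deviation in $\Phi_a[k]$ that a normal in‑neighbor of $a$ can verify on its own is already caught with a single watchful normal neighbor: an inconsistent identity label (Step~1), a fabricated or missing neighbor ID (Step~2, using the two‑hop topology granted by Assumption~\ref{neighborinfo}), an updated value $x_a^{(a)}[k|k]$ not obtained from the data carried in $\Phi_a[k]$ via the update rule (Step~4), or a corrupted entry $x_h^{(a)}[k-1|k]$ for a node $h$ whose own value is known to some normal neighbor of $a$ (Step~3; in particular $h$ itself, which holds $x_h^{(h)}[k-1|k-1]$ in its check set). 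The only way left for $a$ and $b$ to cooperate is for $a$ to corrupt exactly $x_b^{(a)}[k-1|k]$ and $b$ exactly $x_a^{(b)}[k-1|k]$, keeping the rest of each information set consistent. But $b\in\mathcal{N}_c$, so $c$ stored $x_b^{(b)}[k-1|k-1]$ in $\mathcal{C}_c[k-1]$; Step~3 at $c$ applied to $\Phi_a[k]$ then exposes $x_b^{(a)}[k-1|k]\neq x_b^{(b)}[k-1|k-1]$ and declares $a$ malicious, and symmetrically $c$ declares $b$ malicious.

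The step I expect to be the main obstacle is making the case analysis in the sufficiency direction genuinely exhaustive — precisely delimiting which forms of misbehavior are ``locally verifiable'' and therefore already covered by the single‑normal‑neighbor argument, so that the residual case is exactly the mutual value‑falsification between $a$ and $b$. Two further points need care: the timing bookkeeping, since the value $c$ stores is $x_b^{(b)}[k-1|k-1]$ from $\Phi_b[k-1]$ whereas the entry to be checked, $x_b^{(a)}[k-1|k]$, arrives one step later inside $\Phi_a[k]$ (this is where including $i$'s own value in $\mathcal{C}_i$ is essential); and the fact that a misbehaving node may behave honestly for an arbitrary initial phase, so the argument must be applied at the first time step at which $\Phi_a[k]$ or $\Phi_b[k]$ is corrupted rather than from $k=0$.
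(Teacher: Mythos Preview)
Your proposal is correct and follows essentially the same approach as the paper's proof: necessity via the mutual value-falsification attack between the two adjacent malicious nodes (exactly the manipulation of $x_b^{(a)}[k-1|k]$ and $x_a^{(b)}[k-1|k]$), and sufficiency via a case analysis over the possible corruptions in $\Phi_a[k]$ showing that each is caught either by any normal neighbor of $a$ or, for the residual mutual-falsification case, specifically by the common normal neighbor $c$ through Step~3. Your write-up is in fact more explicit than the paper's --- you spell out why Steps~1, 2 and~4 pass in the necessity attack, and you flag the timing issue and the need to argue at the \emph{first} corrupted step --- but the underlying argument is identical.
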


Since the identities of the malicious nodes are unknown, 
we must impose a connectivity requirement so that the condition 
in the lemma holds for any combination of nodes being malicious neighbors
in the network. 
The theorem below provides the main result of this section. Its proof can be found in Appendix B.

\begin{thm}\label{detect} 
	Consider the network modeled by the undirected graph $\mathcal{G}=(\mathcal{V},\mathcal{E})$ with the adversary set $\mathcal{A}$ to be an $f$-total malicious set. Suppose that Assumptions \ref{neighborinfo}, \ref{cannotadd} and \ref{broadcast} hold. Then, under Scheme~1, the following hold.

	(a) All malicious nodes that behave against the given update rule \eqref{updaterule} are detected if and only if 
	for every pair of neighboring nodes, they have at least $f-1$ two-hop paths connecting them.
	
	(b) Under the condition of (a), normal nodes can achieve resilient consensus if $\mathcal{G}$ is ($f+1$)-connected.
\end{thm}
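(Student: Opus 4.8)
The plan is to prove parts (a) and (b) in sequence, leaning on Lemma~\ref{lemma1} for the detection claim and on a standard consensus argument for the convergence claim. For part (a), the key observation is that Lemma~\ref{lemma1} characterizes detectability of a single pair of neighboring misbehaving nodes by the existence of a common normal neighbor, so I would translate "all malicious nodes that misbehave are detected, for every possible $f$-total malicious set" into a purely graph-theoretic redundancy condition. Fix an arbitrary edge $(i,j)\in\mathcal{E}$ and suppose both $i$ and $j$ are malicious; the worst case is that the remaining $f-2$ adversaries are placed so as to eliminate as many of $i$ and $j$'s common neighbors as possible. A node $h$ is a common neighbor of $i$ and $j$ exactly when $(i,h,j)$ is a two-hop path (equivalently $h$ is adjacent to both endpoints in the undirected graph). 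Hence $i$ and $j$ retain a \emph{normal} common neighbor for every placement of the other $f-2$ adversaries if and only if the number of common neighbors of $i$ and $j$ is at least $f-1$, i.e., there are at least $f-1$ internally-disjoint two-hop paths between $i$ and $j$. Quantifying over all edges gives the stated condition. The converse direction is the contrapositive: if some edge $(i,j)$ has only $f-2$ or fewer two-hop paths, then an adversary controlling $i$, $j$, and all their common neighbors (a set of size $\le f$) can mount the coordinated lie described before Lemma~\ref{lemma1} undetected, since no normal node sees both $x_i^{(i)}$ and $x_j^{(j)}$ directly. I should also note the role of Steps~1 and~2 and the detection-share assumption: once \emph{any} such pair is caught, Assumption~\ref{broadcast} propagates the IDs network-wide, and misbehaviors of non-paired malicious nodes are caught trivially by any single normal neighbor via Steps~2--4, so the pairwise condition is indeed the binding one.

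For part (b), I would argue that under the condition of (a) every malicious node that ever deviates from the prescribed update rule~\eqref{updaterule} is detected (in finite time) and, via detection share, removed from every normal node's neighbor set $\mathcal{M}_i[k]$ thereafter. Consequently, after some finite time $K$, the set of malicious nodes still influencing the normal nodes consists only of agents that, from the normal nodes' viewpoint, are executing~\eqref{eqn:x_update2} faithfully with values staying in the safety interval $\mathcal{S}$ (any value leaving $[\overline{x}_{\min},\overline{x}_{\max}]$ triggers detection by the initialization/Step checks). Such agents are, behaviorally, indistinguishable from normal nodes: they average their in-neighbors with positive weights bounded below by $1/n$. So for $k\ge K$ the whole system — normal nodes plus undetected "quasi-normal" adversaries — runs a standard consensus iteration $x[k+1]=W[k]x[k]$ with row-stochastic $W[k]$ whose positive entries are uniformly bounded below, on the graph obtained from $\mathcal{G}$ by deleting the detected adversaries. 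Here $(f+1)$-connectivity of $\mathcal{G}$ guarantees that deleting at most $f$ nodes leaves a connected graph spanning all the surviving nodes, so the product $W[k-1]\cdots W[K]$ is eventually a positive matrix; by the classical Wolfowitz/averaging argument the normal states converge to a common $x^*$. The safety condition follows because each normal update is a convex combination of values in $\mathcal{S}$ and $\mathcal{S}$ is an interval containing all initial values in $\overline{\mathcal{V}}$, hence forward invariant.

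The main obstacle I anticipate is the combinatorial bookkeeping in part (a): carefully justifying that "$f-1$ two-hop paths between every neighboring pair" is \emph{exactly} the right count, counting only the malicious budget already spent on the pair $\{i,j\}$ itself, and confirming that a normal common neighbor truly suffices for detection of \emph{arbitrary} coordinated lies about $x_i^{(i)}[k-1|k]$ and $x_j^{(j)}[k-1|k]$ — this requires invoking the consistency of the stored value $x_h^{(h)}[k-1|k-1]$ in $\mathcal{C}_h[k-1]$ against the relayed values (Step~3) and the fact that a malicious $i$ cannot simultaneously fool the normal common neighbor and the partner $j$ given one-to-many broadcast. A secondary subtlety in part (b) is ensuring detection happens in \emph{finite} time rather than asymptotically, and that an adversary which "behaves like a normal node forever" (never detected) does no harm to safety or consensus — this is where the enlarged safety interval $\mathcal{S}$ and the observation that such an adversary is effectively just another averaging agent on a $(f+1)$-connected graph close the argument.
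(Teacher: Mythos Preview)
Your proposal is correct and follows essentially the same approach as the paper: part~(a) is proved via Lemma~\ref{lemma1} together with the counting argument that two endpoints already consume two of the $f$ adversaries, so $f-1$ common neighbors guarantee one normal witness (sufficiency) while $\le f-2$ common neighbors allow the pair plus all common neighbors to be malicious (necessity); part~(b) uses that detected adversaries are excised and $(f+1)$-connectivity leaves the remaining graph connected. Your treatment of part~(b) is in fact more careful than the paper's---you explicitly handle the ``quasi-normal'' adversaries that never deviate and argue they are harmless averaging agents, whereas the paper simply asserts immediate detection, safety, and connectivity in three sentences.
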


We note that in the case of undirected graphs, for a pair of neighboring nodes to share common neighbors is equivalent to having two-hop paths connecting them. For directed graphs, we need to be more careful as we will see in Section \ref{Secfors2}.
Note that the conditions in Theorem \ref{detect} do not require dense graph structures.
For example, we can check by inspection that both graphs 
in Fig.~\ref{graph9} satisfy the conditions for the case with $f=3$. 


Our study is motivated by the MSR algorithms studied in, 
e.g., \cite{dibaji2018resilient, leblanc2013resilient}. 
There, the notion of graph robustness is shown to be critical 
to guarantee consensus in the presence of malicious agents. This can be achieved by the nodes removing extreme values of neighbors while no detection is performed.
On the other hand, our approach is to achieve adversary detection for resilient consensus, and this is realized by using extended information sets. As a result, the connectivity requirement becomes less restrictive compared to the MSR algorithms though the necessary resources for communication and computation are higher.

As mentioned before, for the MSR algorithms under 
the $f$-total malicious model, resilient consensus
is guaranteed if and only if the graph is $(f+1,f+1)$-robust.
It is known from \cite{leblanc2013resilient} that such a 
graph has the property of being $(f+1)$-connected, but the converse does not hold in general.
The difference between these classes of graphs can be checked
by the two graphs in Fig.~\ref{graph9}.
The graph in (a) is 4-connected and moreover satisfies the two-hop condition in Theorem \ref{detect} for $f=3$.
The one in (b) on the other hand is (4,4)-robust as required by the MSR algorithm with $f=3$.
Hence, when the number of malicious nodes 
in the network is the same, the constraint
on the graph structure for Scheme~1 is less stringent
than that for MSR algorithms.

\section{Scheme 2 with Fully Distributed Detection}\label{Secfors2}

Having the basics established for the detection via two-hop communication, we present our main result on Scheme~2 for resilient consensus in this part. Compared to Scheme~1, it is notable that the new scheme can achieve fully distributed detection for each normal node without the use of detection share.
This feature can be realized by introducing majority voting (\cite{blahut1983theory, parhami1994voting}) and requiring a more dense graph structure. 
Then we provide a necessary and sufficient condition on graph structures for the detection part of Scheme~2. 
Lastly, we show that Scheme~2 can tolerate more malicious nodes in both complete networks and incomplete networks compared to MSR-based algorithms.

\subsection{Resilient Consensus Scheme 2}
The update rule together with the detection algorithm can be outlined as follows:

\begin{scheme}
Each agent $i \in \mathcal{V}$ sends its out-neighbors the information set $\Phi_i[k]$ in \eqref{eqn:Phi}.
After receiving the information sets from its in-neighbors, it first runs the detection algorithm in Algorithm~2. If it detects any malicious neighbors, it reports this detection information to its out-neighbors. 
Then, it will utilize the values from its normal neighbors to update its value by the update rule \eqref{updaterule}.
\end{scheme}


It is important to note that malicious nodes may send fake detection reports to normal nodes in this scheme, which is different from Scheme~1 where the secure detection share is utilized. Hence, the normal nodes must verify if each received data is authentic.

\subsection{Detection Algorithm Design}

Malicious neighbor sets here enable the normal nodes to keep track of its neighbors identified to be malicious. 

\begin{defn}
	\textit{(Malicious neighbor set $\mathcal{A}_i[k]\subset \mathcal{V}$)} Once node $i$ detects any malicious node among its neighbors at time $k$, it puts the node's ID in $\mathcal{A}_i[k]$. Also, once node $i$ receives at least $f+1$ detection reports on some node, it will put the node's ID in $\mathcal{A}_i[k]$. This set is accessible only to node $i$ itself.
\end{defn}

From Scheme~1, we see that the information set plays a crucial role in our detection framework. 
For any normal node $i$ to verify the identity of its neighbors, the
information sets of the neighbors need to be investigated in two parts: (i) the current value, i.e., if it is updated according to the given update rule; (ii) the past values, i.e., if they are manipulated and different from the true values of the corresponding nodes.

The two proposed schemes share common features in checking the neighbors' current values. 
Nevertheless, certain difference lies between the two schemes about checking the past values of the neighbors.
With the detection share function in Scheme~1, normal node $i$ needs to check whether its own past value is manipulated in the information sets of its one-hop neighbors. 
By contrast, in Scheme~2, node $i$ should check whether any entries of the past values are manipulated in the information sets of its one-hop neighbors. 
Thus node $i$ needs to obtain the true state values of its neighbors' neighbors, i.e., two-hop neighbors. 
For this purpose, the information sets of its one-hop neighbors must be utilized. Among the multiple information sets containing the state value of its two-hop neighbor $h$, there may be some malicious node relaying a wrong value of node $h$. Thus node $i$ needs to carry out a majority voting on the true state value of two-hop neighbor $h$ through all the one-hop neighbors' information sets containing the value of node $h$. Here, majority voting means that if node $i$ receives $m$ values of node $h$, among the $m$ values, if more than $m/2$ values are the same, then node $i$ will take this value as the true value of node $h$. 

\textcolor{black}{
To fully utilize the capability of Algorithm~2, we here define the necessary graph condition for Algorithm~2.
}

\begin{defn} \textcolor{black}{
	A directed graph $\mathcal{G} = (\mathcal{V},\mathcal{E})$ is said to satisfy the condition for Algorithm~2 with parameter $f$ if for any in-neighbor $j\in \mathcal{N}_i$ of node $i$, and any $h\in \mathcal{N}_j,h\neq i$, one of the following conditions holds:}
	\begin{enumerate}
		\item $h\in \mathcal{N}_i$;
		\item $h\notin \mathcal{N}_i$, and there are at least $2f+1$ directed two-hop paths from $h$ to $i$ (including the one through $j$).
		
	\end{enumerate}
\end{defn}

We now present our distributed detection in Algorithm~2. 
Here, each node $i$ performs majority voting on two things: the nodes' values and detection information. Since we consider the $f$-total/$f$-local model in this paper, at most $f$ values could be false in the neighborhood of node $i$. Thus if node $i$ receives the same information from at least $f+1$ distinct neighbors, it considers this information trustable. After obtaining the true values of its one-hop neighbors and two-hop neighbors, it follows the same detection procedures as the ones in Scheme~1.

\begin{algorithm}[t]\footnotesize
	\caption{Detection Algorithm for Scheme 2} 
	\begin{algorithmic}
		\REQUIRE $\Phi_j[k],\Phi_j[k-1],j\in \mathcal{N}_i\cup \{i\} $
		\ENSURE IDs of the malicious neighbors 
		
		\STATE 
			Initialization: Node $i$ creates a check set $\mathcal{C}_i[k-1]$ to store the values of one-hop neighbors $x_j^{(j)}[k-1]$ and also the values of two-hop neighbors $x_h^{(h)}[k-1]$ at time $k-1$. Take the check set $\mathcal{C}_i[0]$ and malicious neighbor set $\mathcal{A}_i[0]$ to be empty. Moreover, node $i$ receives and stores the initial states of its neighbors as $x_j^{(i)}[0]=x_j^{(j)}[0]$ for all $j\in\mathcal{N}_i$. If $x_i^{(j)}[0] \notin [\overline{x}_{\min}, \overline{x}_{\max}]$ for any $j\in\mathcal{N}_i^{\textup{in}}$, node $i$ will consider node $j$ as malicious.

		\STATE At each time $k$, node $i$ executes the following steps:
		
		\STATE For each value of two-hop neighbor $x_h^{(h)}[k-1]$, it gathers $x_h^{(j)}[k-1|k]$ from $ \Phi_j[k]$, $j\in \mathcal{N}_i$, does the majority voting on the value of $x_h^{(h)}[k-1]$ and stores $x_h^{(h)}[k-1]$ into  $\mathcal{C}_i[k-1]$.
		
		\STATE \textcolor{black}{Let $\mathcal{A}_i[k]=\mathcal{A}_i[k-1]$. Once node $i$ receives at least $f+1$ detection reports on some node, it puts the node's ID in the malicious neighbor set $\mathcal{A}_i[k]$. }
		

		\FOR{$j\in \mathcal{N}_i$} 
		
		\STATE \textcolor{black}{  \textbf{Steps 1-5 in Algorithm 1.} }
		%
		%
		%
		%
		%
		%
		%
		%
		
		%
		%

		\ENDFOR
		
		\STATE Return identities (malicious or normal) of neighbors. 
		\STATE Node $i$ stores $x_j^{(j)}[k|k] $ from $ \Phi_j[k], j\in \mathcal{N}_i \cup \{i\} $, into $\mathcal{C}_i[k]$.
		
	\end{algorithmic}
\end{algorithm}

\subsection{Necessary Graph Structure for Scheme 2}

In directed networks, the necessary condition for node $i$ to detect its neighbor $j$ when it misbehaves is the following:
node $i$ has full access to the information that node $j$ must use to update its value if node $j$ is normal, that is,
the true values $x_h^{(h)}[k-1]$, $\forall h\in \mathcal{N}_j$, used in the control input of node $j$ \textcolor{black}{ 
and the correct detection information of the two-hop neighbor $h$.	
}

Similar to Scheme~1, we must impose a connectivity requirement on every node and its neighbors for Scheme~2, such that the detection is guaranteed for any combination of nodes being malicious in the network. The following theorem is the main result of this section.

\begin{thm}\label{detect2} 
	Consider the network modeled by the directed graph $\mathcal{G}=(\mathcal{V},\mathcal{E})$ where the adversary set $\mathcal{A}$ follows the $f$-total malicious model. Suppose that Assumptions \ref{neighborinfo} and \ref{cannotadd} hold. Then, under Scheme~2, the following hold.

	(a) All malicious nodes that behave against the given update rule \eqref{updaterule} are detected if and only if $\mathcal{G}$ satisfies the condition for Algorithm~2 with parameter $f$.
	
	(b) Under the condition of (a), normal nodes can achieve resilient consensus if $\mathcal{G}$ has $(f+1)$-connected rooted spanning trees.
\end{thm}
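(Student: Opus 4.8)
The plan is to prove the two parts separately, with part (a) being a detection-capability statement proved by a necessity/sufficiency argument at the level of a single pair (node, neighbor), and part (b) being a consensus argument that combines the detection guarantee of (a) with a connectivity/convergence analysis analogous to the one behind Theorem \ref{detect}(b).

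For part (a), I would first argue \emph{sufficiency}. Fix a normal node $i$, an in-neighbor $j\in\mathcal{N}_i$, and suppose $j$ misbehaves, i.e., its broadcast information set $\Phi_j[k]$ either reports a wrong current value $x_j^{(j)}[k|k]$ (violating \eqref{updaterule}) or reports a wrong past value $x_h^{(j)}[k-1|k]$ for some $h\in\mathcal{N}_j$. If $h\in\mathcal{N}_i$ (case 1 of the graph condition), then $i$ stored $x_h^{(h)}[k-1]$ directly (or via the initialization/check-set bookkeeping of Algorithm~2) and the inconsistency is caught by Steps~2--3. If $h\notin\mathcal{N}_i$ (case 2), then by hypothesis there are at least $2f+1$ directed two-hop paths $h\to \ell \to i$; since at most $f$ of the intermediate nodes $\ell$ can be adversarial under the $f$-total model, at least $f+1\geq \lceil (2f+1)/2\rceil + $ (strict majority) of the relayed copies of $x_h^{(h)}[k-1]$ are correct, so the majority vote in Algorithm~2 recovers the true value $x_h^{(h)}[k-1]$ and stores it in $\mathcal{C}_i[k-1]$; a wrong relay by $j$ is then detected in Step~3, and a wrong update is detected in Step~4 since $i$ now knows all the true inputs $\{x_h^{(h)}[k-1]\}_{h\in\mathcal{N}_j}$ that $j$ was obliged to use. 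The same $2f+1 \Rightarrow$ majority argument handles the detection-report voting. For \emph{necessity}, I would exhibit, whenever the graph condition fails, an adversary configuration evading detection: pick $i$, $j$, and $h\notin\mathcal{N}_i$ with at most $2f$ two-hop paths $h\to i$; place $f$ adversaries on these paths plus make $j$ adversarial (staying within $f$-total by a careful count, or invoking the pairwise-indistinguishability style argument of Lemma~\ref{lemma1}), and let them collude so that exactly half the relayed copies of $x_h^{(h)}[k-1]$ are a consistent fake; then $i$ cannot distinguish the fake from the true value, $j$ can "legitimately" use the fake, and no step of Algorithm~2 flags $j$. This indistinguishability construction is the step I expect to be the main obstacle, since it must simultaneously respect the $f$-total bound, keep the normal-node subgraph connected enough to make the scenario meaningful, and ensure every consistency check in Steps~1--5 is passed.

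For part (b), assume the graph condition of (a) holds and $\mathcal{G}$ has $(f+1)$-connected rooted spanning trees. By part (a), every malicious node that ever deviates from \eqref{updaterule} is detected by each of its normal in-neighbors within one step (and, via the $f+1$-report rule, possibly by further normal nodes), after which it is placed in $\mathcal{A}_i[k]$ and permanently excluded from $i$'s update. Hence there is a finite time $K$ after which every normal node uses, in \eqref{eqn:x_update2}, only values that are genuine states — either of normal nodes, or of "stealthy" adversaries that by definition are running the exact nominal consensus update. Two things must then be checked. First, the \emph{safety} condition: any neighbor whose (initial or current) value leaves $[\overline{x}_{\min},\overline{x}_{\max}]$ is flagged by the initialization step, so all values actually used lie in $\mathcal{S}$, and since \eqref{updaterule} is a convex combination, $x_i^{(i)}[k]\in\mathcal{S}$ for all $k$. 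Second, \emph{consensus}: after time $K$ the normal nodes together with the stealthy adversaries all obey the averaging rule \eqref{updaterule} on the subgraph obtained by deleting the detected adversaries; deleting at most $f$ nodes from a digraph with $(f+1)$-connected rooted spanning trees leaves a digraph that still has a rooted spanning tree, so this reduced system is a standard time-invariant averaging iteration with a spanning tree and positive self-weights, which converges to a common limit $x^\ast$. Combining with safety gives $x^\ast\in\mathcal{S}$, completing the proof. The one point requiring care here is that detection is per-edge and asynchronous across the network, so $K$ must be taken as the maximum over all adversary–neighbor pairs of the (finite) detection delay, and one must argue the finitely many pre-$K$ steps cannot push any normal state outside $\mathcal{S}$ — which again follows from convexity of \eqref{updaterule} together with the initialization screening.
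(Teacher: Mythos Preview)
Your proposal is correct and takes essentially the same route as the paper: sufficiency of (a) via the two cases and majority voting, necessity of (a) via a tied-vote construction, and (b) via the spanning-tree-after-removal argument.

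Two brief clarifications on the necessity of (a), where you flag uncertainty. First, the count does stay within $f$-total: since the path $h\to j\to i$ is one of the at most $2f$ two-hop paths, node $j$ is itself one of the intermediate relays, so choosing $j$ together with at most $f-1$ other intermediate nodes to be malicious (total $\le f$) suffices to make the false copies at least tie the true ones, and majority voting fails to recover $x_h^{(h)}[k-1]$. Second, the construction is simpler than you anticipate: the claim in (a) is that every misbehaving neighbor is detected by each of its normal out-neighbors, so for necessity it is enough to exhibit \emph{one} normal node $i$ that cannot flag \emph{one} misbehaving in-neighbor $j$; there is no need to keep the normal subgraph connected or to make the attack pass Steps~1--5 at every other node. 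The paper simply observes that $i$ cannot resolve $x_h^{(h)}[k-1]$ and hence cannot detect $j$'s tampering with that single entry of $\Phi_j[k]$. Your part (b) spells out more detail (the finite time $K$, stealthy adversaries, convexity for safety) than the paper's terse version, but the logic is identical.
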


\begin{proof} \emph{(a) Necessity:} 
	We prove by contradiction. First, suppose that there is a node $h$ in $ \mathcal{N}_j$ but not in $\mathcal{N}_i$, and there is no two-hop path from node $h$ to node $i$. Take node $j$ to be malicious. In this case, it can change $x_h^{(j)}[k-1]$ in $\Phi_j[k]$ arbitrarily, but such attacks cannot be detected by node $i$ obviously since node $i$ cannot obtain any information from node $h$ through the information sets that it receives.
	
	Next, suppose that there is a node $h\in \mathcal{N}_j$ with $h\notin \mathcal{N}_i$, and that there are at most $2f$ two-hop paths from node $h$ to node $i$ including the path containing node $j$. Take node $j$ to be malicious. In this case, node $i$ will get copies of $x_h^{(h)}[k-1]$ from at most $2f$ different information sets. Here, note that node $i$ can also obtain copy of $x_h^{(h)}[k-1]$ from $\Phi_j[k]$, i.e., $x_h^{(j)}[k-1]$. Among the $2f$ copies of $x_h^{(h)}[k-1]$, no majority is guaranteed when we consider the worst case. That is, there may be $f$ identical values created by malicious nodes and $f$ identical values created by normal nodes. Thus node $i$ cannot determine which one is actually the true value of $x_h^{(h)}[k-1]$. Hence, node $j$ cannot be detected by node $i$. 
	
	\emph{Sufficiency:} 
	For detection, we prove sufficiency by showing that node $i$ can confirm the true value of every entry of the information set $\Phi_j[k]$ of neighbor node $j$ by obtaining the true value $x_h^{(h)}[k-1]$ of every neighbor $h\in \mathcal{N}_j$ of node $j$, from the previous time step $k-1$. \textcolor{black}{ Moreover, node $i$ can obtain the correct detection information of its two-hop neighbors at time $k+1$. Then we can prove that node $i$ will detect node $j$ at time $k+1$ if node $j$ sends out faulty $\Phi_j[k]$.}
	
	For node $i$, consider the following two cases separately: (i) only condition~1 holds; (ii) only condition~2 holds.
	
	(i) In the case where $h\in \mathcal{N}_i$, it is clear that node $i$ can receive the true value of $x_h^{(h)}[k-1]$ from $\Phi_h[k-1]$ and have the right detection information of its one-hop neighbor $h$ at time $k+1$. 
	
	(ii) Suppose that $h\notin \mathcal{N}_i$, and there are at least $2f+1$ directed two-hop paths from node $h$ to node $i$. In this case, there is some normal node $l\in \mathcal{N}_h^{\textup{out}} \cap \mathcal{N}_i$ which carries the true value of $x_h^{(h)}[k-1]$ in its information set at time $k$. 
	If the majority of the $2f+1$ paths from $h$ to $i$ contains nodes as $l$, then node $i$ can get the true value of $x_h^{(h)}[k-1]$. 
	Since there are at most $f$ malicious nodes among the in-neighbors of node $i$, and there are at least $2f+1$ directed two-hop paths from $h$ to $i$ including the path containing node $j$, we have the needed majority.
	
	\textcolor{black}{
	We can apply the same analysis on the detection information of node $i$'s two-hop neighbors. In the same case, if node $h$ sends out faulty $\Phi_h[k-1]$, then it is detected by its one-hop neighbors at time $k$. Recall that there are at least $2f+1$ directed two-hop paths from node $h$ to node $i$ and at most $f$ malicious nodes among the in-neighbors of node $i$. Thus, node $i$ can obtain the correct identities of its two-hop neighbors by majority voting at the receiving stage of time $k+1$. }
	
	\textcolor{black}{
	Therefore, node $i$ knows the true value of $x_h^{(h)}[k-1]$ and obtains the correct detection information of its two-hop neighbors $h$ before running the detection algorithm at time $k+1$. Thus if node $j\in \mathcal{N}_i$ sends out faulty $\Phi_j[k]$ by possible manipulation including modifying the entry of $x_h^{(j)}[k-1]$ in $\Phi_j[k]$, by simply breaking the update rule, or by sending false information on the identity of node $h$, then node $i$ will detect. }
	
	\emph{(b)} Malicious nodes will be detected immediately once they misbehave. Thus misbehaviors of malicious nodes cannot affect normal nodes since normal nodes exclude values from detected malicious nodes. Hence, the safety condition is guaranteed. Moreover, by the graph $\mathcal{G}$ having $(f+1)$-connected rooted spanning trees, after removing $f$ malicious nodes, the subgraph of normal nodes contains at least one rooted spanning tree. Therefore, resilient consensus is achieved.
\end{proof}

\begin{rem}\label{flocal}
	We must highlight that this result for the $f$-total model
	can be easily extended to the case of $f$-local model, which is more adversarial because more than $f$ malicious agents in total may be in the entire network.
	Actually, the necessary and sufficient condition
	for Scheme~2 to detect malicious nodes that behave against
	the given update rule \eqref{updaterule} under $f$-local model is the
	same as the condition stated in Theorem \ref{detect2}.
	The flow of proof follows along similar lines as the one shown above.
\end{rem}

\subsection{Discussion}\label{discuss2}

\begin{figure}[t]
	\centering
	
	\subfigure[]{
		\includegraphics[width=1.8cm]{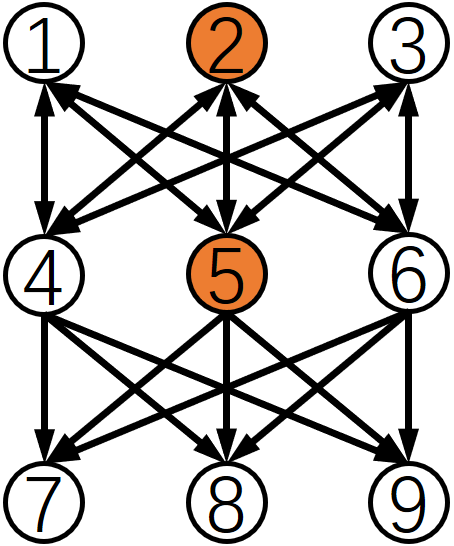}
		\label{9nodesforp2}
	}
	\quad
	\vspace{-7pt}
	\subfigure[]{
		\includegraphics[width=2.5cm]{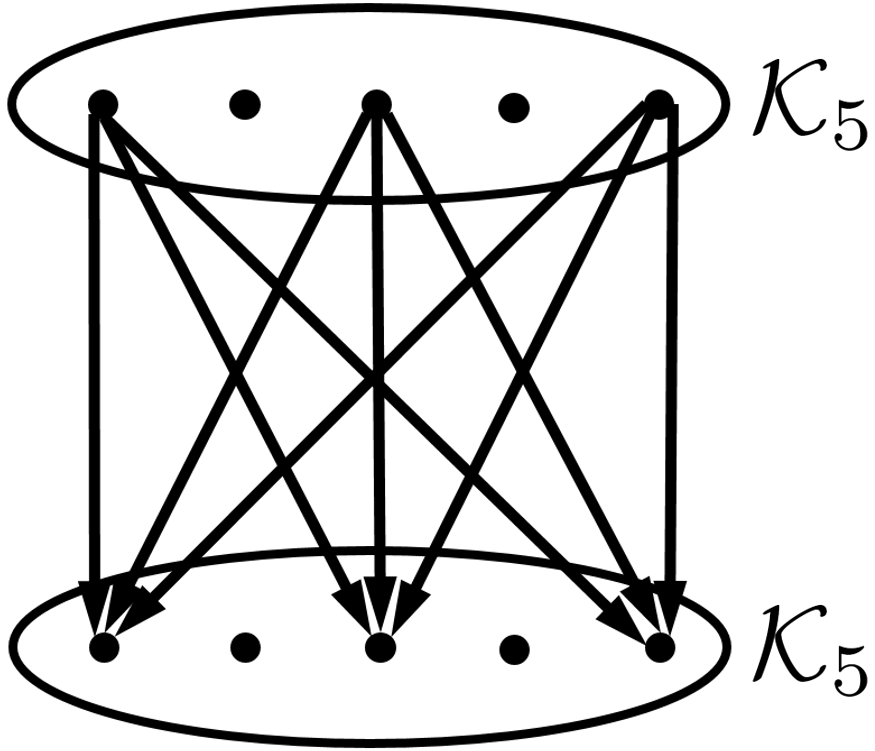}
		\label{connectcomplete}
	}
	\vspace{-4pt}
	\caption{Example graphs satisfying the criteria for Scheme~2.}
	
\end{figure}

The conditions for Scheme~2 guarantee that for any out-neighbor $i$ of node $j$, it has full access to the broadcast values of in-neighbors of node $j$. It gains the values either by being a direct neighbor of node $j$'s in-neighbors or through majority voting over at least $2f+1$ paths. For the latter case, the majority of the voting always exists and is correct since there are at most $f$ malicious nodes in the neighbors of node $i$ due to the $f$-total/$f$-local model. Also, the majority voting enables node $i$ to verify if a detection report on its two-hop neighbors is valid.
In the computer science literature, similar redundancy schemes are often used to provide security and reliability to systems. For example, if a transmission system is designed to tolerate up to $f$ failures, it must have $2f+1$ copies of the transmitted information along with majority voting for verification (\cite{blahut1983theory}).

Here we provide some example graphs satisfying the conditions in Theorem \ref{detect2}.
The network in Fig. \ref{9nodesforp2} satisfies the conditions for Scheme~2 under the $1$-local model, i.e., there is at most one malicious node in the neighbors of each normal node. 
Moreover, there is a characteristic three-layer structure. We can extend this idea to the cases with any $f$. Each layer should have $2f+1$ nodes for $f$-total/$f$-local model. Each node in one layer should be connected with every node in the neighbor layers and have no connection with the nodes in its own layer. This structure can also have many layers as long as the $f$-total or $f$-local set is satisfied for each $i\in \mathcal{N}$.
Furthermore, combining this structure with complete subgraphs (i.e., cliques), we can have graph structures like Fig.~\ref{connectcomplete}, which satisfies the conditions for Scheme~2 as well.

It is observed that each node in a complete graph can detect every malicious node since it has access to the state value of every node in the network. Thus we can enhance the performance of Scheme~2 by introducing nodes having such properties. Node $i$ is said to be a full access node if it is an out-neighbor of all other nodes in the network, i.e., $d_i=n-1$.

It is important to note that we do not assume such full access nodes to be normal. As long as the conditions for Scheme 2 are met, a full access node can also be detected by its normal neighbors when it behaves maliciously. This setting is different from the authorized central nodes and the mobile detectors in \cite{zhao2018resilient}, which randomly visit each node in the network and are assumed to be fault-free. Nevertheless, a full access node has the following property when it is normal. This result can be easily proved by Theorem \ref{detect2} since any node in the network is an in-neighbor of the full access node.

\begin{cor}\label{fullaccessnode}
	A normal full access node can detect any node that behaves against the update rule \eqref{updaterule} in the network under Scheme~2.
\end{cor}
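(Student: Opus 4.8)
The plan is to reduce the claim to the sufficiency direction of Theorem~\ref{detect2}(a), localized to the full access node. First I would observe that if node~$i$ is a full access node, then by definition every other node $j\in\mathcal{V}\setminus\{i\}$ satisfies $(j,i)\in\mathcal{E}$, so $\mathcal{N}_i=\mathcal{V}\setminus\{i\}$ and $d_i=n-1$. In particular, for every in-neighbor $j\in\mathcal{N}_i$ and every $h\in\mathcal{N}_j$ with $h\neq i$ we automatically have $h\in\mathcal{V}\setminus\{i\}=\mathcal{N}_i$; that is, condition~1 in the definition of the condition for Algorithm~2 holds for node~$i$ and for all such pairs $(j,h)$. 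Hence the connectivity requirement needed by node~$i$ in Theorem~\ref{detect2}(a) is met trivially, and condition~2 (the $2f+1$ two-hop paths together with majority voting) is never invoked.

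Next I would invoke the sufficiency argument in the proof of Theorem~\ref{detect2}(a), noting that that argument is local: node~$i$'s ability to certify a neighbor~$j$ depends only on whether, for each $h\in\mathcal{N}_j$, condition~1 or~2 holds relative to node~$i$. Since here condition~1 always holds, case~(i) of that proof applies to every relevant~$h$: node~$i$ receives the true value $x_h^{(h)}[k-1]$ directly in $\Phi_h[k-1]$ and obtains the correct identity information on~$h$ at time $k+1$, because $h$ is itself a one-hop neighbor of~$i$. Consequently node~$i$ can reconstruct every entry that a normal node~$j$ would place in $\Phi_j[k]$, and can therefore detect~$j$ at time $k+1$ whenever $j$ submits a faulty $\Phi_j[k]$ --- whether by modifying some entry $x_h^{(j)}[k-1]$, by violating the update rule~\eqref{updaterule} in $x_j^{(j)}[k|k]$, or by misreporting the identity of some~$h$.

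Finally, since the above holds for every $j\in\mathcal{N}_i$ and $\mathcal{N}_i=\mathcal{V}\setminus\{i\}$, while the normal node~$i$ itself never violates~\eqref{updaterule}, node~$i$ detects every node in the network that behaves against the update rule, which is the claim. I do not anticipate a genuine obstacle here; the only point requiring a little care is to make explicit that the sufficiency proof of Theorem~\ref{detect2}(a) is stated for a generic monitoring node and thus specializes to the single full access node~$i$ without the global graph condition having to hold at every node, and that Assumptions~\ref{neighborinfo} and~\ref{cannotadd} remain in force throughout.
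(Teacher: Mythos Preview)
Your proposal is correct and follows essentially the same approach as the paper, which simply states that the corollary follows from Theorem~\ref{detect2} because every node in the network is an in-neighbor of the full access node. You have usefully made explicit what the paper leaves implicit---namely that condition~1 of the Algorithm~2 graph condition is automatically satisfied at the full access node, and that the sufficiency argument of Theorem~\ref{detect2}(a) is local to the monitoring node and therefore applies without requiring the global graph condition to hold elsewhere.
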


As a result, Scheme~2 can guarantee resilient consensus in incomplete networks when the majority of the nodes are normal, if a full access node is deployed properly in the network. For example, the five-node network in Fig.~\ref{5nodesforp2} could tolerate two malicious nodes when the conditions for $1$-local are met except for the full access node 1. In the same graph, if only node 1 becomes malicious and the conditions for $1$-local are also met for other nodes, then resilient consensus is still guaranteed.

The use of full access nodes may be difficult in practice.
In the simulation, we will examine another approach
to enhance the connectivity of the network by the introduction
of relay nodes. Such nodes are limited in number, but forward
the received messages with stronger transmission power so that
the messages reach more nodes in the system.
The use of relay nodes has been well studied for wireless sensor
networks (e.g., \cite{zhang2007fault}).

\begin{figure}[]
	\begin{center}
		
		\includegraphics[width = 1.5cm ]{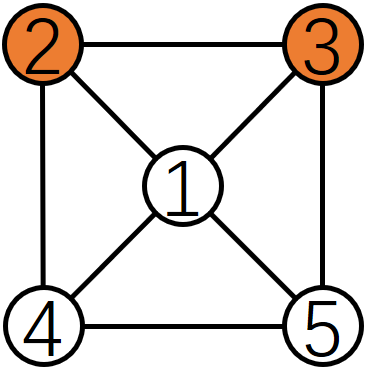}
		
		\vspace{-5pt}
		\caption{Undirected graph of 5 nodes with nodes 2 and 3 to be malicious.}\label{5nodesforp2}
		
	\end{center}
\end{figure}

Now, by the role that full access nodes can play, we can derive the maximum tolerable number of malicious nodes in incomplete graphs for Scheme~2. See Appendix C 
for the proof.

\begin{prop}\label{minimumgraph}
	Consider the network of nodes modeled by the incomplete directed graph $\mathcal{G}=(\mathcal{V},\mathcal{E})$ with $n$ nodes. Assume that it meets the conditions for Scheme~2 given in Theorem \ref{detect2} (under the $f$-total model). Then Scheme~2 detects all the $f$ malicious nodes in the network and guarantees resilient consensus only if $n>2f$.
\end{prop}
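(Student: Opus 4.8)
The plan is to establish the contrapositive: if $n \le 2f$, then Scheme~2 cannot simultaneously guarantee detection of all $f$ malicious nodes and resilient consensus on the incomplete graph $\mathcal{G}$. The starting point is the observation, already used in the necessity part of Theorem~\ref{detect2}, that detection of a misbehaving in-neighbor $j$ of node $i$ relies on node~$i$ reconstructing the true values $x_h^{(h)}[k-1]$ for every $h\in\mathcal{N}_j\setminus\{i\}$. When $h\notin\mathcal{N}_i$, this reconstruction is done by majority voting over the two-hop paths from $h$ to $i$, and the condition for Algorithm~2 demands at least $2f+1$ such paths. I would first argue that with only $n\le 2f$ nodes, any node $h$ that is not a direct in-neighbor of $i$ can have at most $n-2 \le 2f-2 < 2f+1$ internally disjoint two-hop paths to $i$ (the intermediate node ranges over $\mathcal{V}\setminus\{h,i\}$), so condition~2 of the definition fails whenever condition~1 fails; hence the \emph{only} way $\mathcal{G}$ can meet the criterion of Theorem~\ref{detect2} is that for every in-neighbor $j$ of every node $i$, all of $\mathcal{N}_j\setminus\{i\}$ lie in $\mathcal{N}_i$. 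This forces a very rigid neighborhood-closure property on $\mathcal{G}$.

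Next I would push this closure property to its conclusion. The condition ``$h\in\mathcal{N}_j$ and $h\neq i$ implies $h\in\mathcal{N}_i$'' for every $i$ and every $j\in\mathcal{N}_i$ essentially says that the in-neighborhoods are nested along edges, and iterating it along any path in $\mathcal{G}$ shows that within each strongly connected piece every node must be an in-neighbor of every other node. Combined with the requirement from part~(b) that $\mathcal{G}$ have $(f+1)$-connected rooted spanning trees — so in particular $\mathcal{G}$ is connected in the rooted-spanning-tree sense and cannot be split into independent components each of which would need $\ge f+1$ nodes — I would conclude that $\mathcal{G}$ must in fact be the complete graph $\mathcal{K}_n$. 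This contradicts the hypothesis that $\mathcal{G}$ is incomplete, completing the argument that $n>2f$ is necessary. An alternative, perhaps cleaner, route to the same contradiction: pick any non-edge $(u,v)\notin\mathcal{E}$ (which exists since $\mathcal{G}$ is incomplete); since the graph has rooted spanning trees there is a path reaching $v$, take its last edge $(w,v)$, and apply the closure property to $w\in\mathcal{N}_v$ to propagate reachability of non-neighbors back through the graph, eventually contradicting the $2f+1$-path count because fewer than $2f+1$ intermediaries are available when $n\le 2f$.

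The main obstacle I anticipate is the bookkeeping in the second paragraph: making precise the claim that ``fewer than $2f+1$ two-hop paths are available'' in a way that correctly counts paths through distinct intermediate vertices (the definition in the excerpt counts two-hop paths $h\to \ell\to i$, which are distinguished by $\ell$, so the count is bounded by $|\mathcal{V}\setminus\{h,i\}| = n-2$), and then verifying that this bound, together with $n\le 2f$, genuinely rules out every configuration where some in-neighbor $j$ of some $i$ has a neighbor $h\notin\mathcal{N}_i$. One must be careful that the edge from $j$ itself counts as one of the two-hop paths (as the definition states), so the effective requirement is $2f$ \emph{additional} intermediaries disjoint from $j$, i.e.\ $n-3\ge 2f$ would be needed, reinforcing $n>2f$; I would state the counting lemma explicitly and then the contradiction with incompleteness follows quickly. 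The part~(b) connectivity hypothesis is used only to exclude the degenerate escape route of a disconnected graph, so it plays a minor but necessary role.
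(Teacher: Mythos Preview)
Your approach is quite different from the paper's. The paper does not run a general structural argument at all; it fixes the single incomplete digraph obtained from $\mathcal{K}_n$ by deleting one edge $(j,i)$, observes that there are exactly $n-2$ two-hop paths from $j$ to $i$, and extracts $n>2f$ from the majority-voting inequality $(n-2)-(f-1)>\lfloor(n-2)/2\rfloor$ taken in the ``extreme case'' where $j$ itself is one of the $f$ malicious nodes (so at most $f-1$ intermediaries are corrupted). Notice that the paper never invokes the literal $2f+1$-path requirement of the Algorithm~2 condition; had it done so, the conclusion would be $n-2\ge 2f+1$, i.e.\ $n\ge 2f+3$, strictly stronger than what is claimed. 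The paper is really computing the maximal tolerable $f$ for this nearly-complete graph and reporting that bound as the necessary one.

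Your route has a genuine gap at the implication ``closure property $+$ $(f{+}1)$-connected rooted spanning trees $\Rightarrow$ $\mathcal{G}=\mathcal{K}_n$.'' You correctly deduce that when $n\le 2f$ condition~2 is vacuous, so in-neighborhoods are transitively closed along edges and every strong component is a clique; but the rooted-spanning-tree connectivity does \emph{not} force a single strong component in a directed graph. Concretely, take $n=4$, $f=2$, with two $2$-cliques $\{1,2\}$ and $\{3,4\}$ together with all four edges from $\{1,2\}$ to $\{3,4\}$ and none back. One checks directly that condition~1 of the Algorithm~2 criterion holds for every relevant triple $(i,j,h)$, and that the graph has $3$-connected rooted spanning trees (deleting any two vertices leaves a pair joined by at least one edge). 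Both hypotheses of Theorem~\ref{detect2} are thus met, the graph is incomplete, yet $n=2f$; your argument would have to declare this graph complete, which it is not. Your alternative back-propagation from a missing edge $(u,v)$ meets the same obstruction: with $u=3$, $v=1$ you only rediscover that $3$ is an in-neighbor of nothing that reaches $1$, which is consistent and yields no contradiction. The closure-to-completeness step would go through for \emph{undirected} graphs (closed neighborhoods become equal along every edge), but Scheme~2 and this proposition are stated for directed graphs; to recover the paper's conclusion you would have to abandon the general closure argument and, as the paper does, work with the specific nearly-complete configuration and its majority-voting count.
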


As we have seen in Proposition \ref{minimumgraph}, Scheme~2 can tolerate $n\geq 2f+1$ in incomplete networks if full access nodes are deployed properly. For Scheme~1, we can similarly obtain a necessary bound on the number of malicious nodes as $n\geq f+3$ by following the proof technique of Proposition \ref{minimumgraph}. These bounds are for incomplete graphs and are conservative if applied to complete graphs.
For Scheme 2, the bound for complete graphs is shown to be $n\geq f+2$ as stated in the following corollary of Theorem \ref{detect2}. This bound in fact holds for Scheme~1 too, which can be derived as a corollary of Theorem \ref{detect}.

\begin{cor}\label{completegraph2}
	For a complete graph $\mathcal{K}_n$, it can tolerate $f\leq n-2$ malicious nodes in the graph for the normal ones to reach resilient consensus by using Scheme~2.
\end{cor}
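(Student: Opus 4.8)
The plan is to verify that the complete graph $\mathcal{K}_n$ with $n\geq f+2$ satisfies both conditions of Theorem~\ref{detect2}, and then invoke that theorem directly; the statement then follows since $n\geq f+2$ is exactly $f\leq n-2$.

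First I would check the detection condition in Theorem~\ref{detect2}(a), i.e., that $\mathcal{K}_n$ satisfies the condition for Algorithm~2 with parameter $f$. In $\mathcal{K}_n$ every node has $\mathcal{N}_i=\mathcal{V}\setminus\{i\}$, so for any in-neighbor $j\in\mathcal{N}_i$ and any $h\in\mathcal{N}_j$ with $h\neq i$ we automatically have $h\in\mathcal{N}_i$; that is, Condition~1 of the definition holds, trivially and for \emph{any} value of $f$. (Equivalently, every node of $\mathcal{K}_n$ is a full access node, $d_i=n-1$, so Corollary~\ref{fullaccessnode} already gives that each normal node detects any node violating \eqref{updaterule}.) Hence the graph condition of Theorem~\ref{detect2}(a) is met, and by that theorem all malicious nodes that behave against \eqref{updaterule} are detected.

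Next I would check the consensus condition of Theorem~\ref{detect2}(b), namely that $\mathcal{K}_n$ has $(f+1)$-connected rooted spanning trees. Unpacking the definition with $k=f+1$, two requirements must hold: (i) the graph has at least $k+1=f+2$ nodes, which holds precisely because $n\geq f+2$; and (ii) no set of $k-1=f$ nodes, upon removal, leaves a digraph without any rooted spanning tree. For (ii), deleting any $f$ of the $n$ vertices of $\mathcal{K}_n$ leaves the complete graph $\mathcal{K}_{n-f}$ on $n-f\geq 2$ vertices, which is strongly connected and hence has a rooted spanning tree (rooted at any of its nodes). Thus $\mathcal{K}_n$ has $(f+1)$-connected rooted spanning trees whenever $f\leq n-2$.

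With both conditions verified, Theorem~\ref{detect2} applies for $f\leq n-2$: Scheme~2 detects every malicious node that behaves against \eqref{updaterule}, and the normal nodes reach resilient consensus (a malicious node that always obeys \eqref{updaterule} and stays in $[\overline{x}_{\min},\overline{x}_{\max}]$ is harmless, being then indistinguishable from and behaving as a normal node). This proves the corollary. There is no substantial obstacle; the only step needing care is reading off from the non-standard definition of ``$(f+1)$-connected rooted spanning trees'' that the threshold $n=f+2$ is the binding constraint, since on a complete graph the detection requirement degenerates to a triviality. If one additionally wanted tightness, one would note that for $f=n-1$ the graph $\mathcal{K}_n$ has only $f+1<f+2$ nodes and so fails requirement (i), so Theorem~\ref{detect2}(b) can no longer be invoked; but only the sufficiency direction is needed for the stated result.
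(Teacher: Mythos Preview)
Your proposal is correct and follows exactly the approach the paper intends: the paper does not give an explicit proof but simply states that the result is ``a corollary of Theorem~\ref{detect2}'', and you carry this out by verifying that $\mathcal{K}_n$ satisfies both the Algorithm~2 graph condition (trivially, via Condition~1, since every node is a full access node) and the $(f+1)$-connected rooted spanning tree condition (which is where the bound $n\geq f+2$ actually bites). Your unpacking of the nonstandard ``$(f+1)$-connected rooted spanning trees'' definition is accurate, and the tightness remark, while not required, is a nice addition.
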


We summarize the maximum tolerable numbers of malicious nodes in complete graphs for several algorithms in Table \ref{table1}. 
In computer science (\cite{bonomi2019approximate, leblanc2013resilient, Lynch}), a common limitation is that MSR-based algorithms have the maximum tolerable number of malicious nodes  $n\geq 2f+1$ only for complete graphs. 
In comparison, it is clear that the proposed Schemes~1 and 2 can tolerate more adversaries.

We conduct some comparisons between Schemes 1 and 2. Their differences are shown in Table \ref{table2}.
Observe that Scheme 1 requires less connections in graphs compared to Scheme 2. For example, in the 9-node networks in Fig. \ref{graph9fors2}, networks (a) and (b) satisfy the conditions for Schemes 1 and 2 under 2-total malicious model, respectively.
It is observed that in this case strictly more connections are required for Scheme~2.
Thus, in general networks without any full access node, the requirement for Scheme~1 is easier to meet than that for Scheme~2.

\begin{table}[t]\scriptsize
	\centering
	\caption{Tolerable number of malicious nodes for complete graphs.}
	
	\vspace{0.15cm}
	\setlength{\tabcolsep}{2.8mm}{
		\begin{tabular}{cccc} 
			\toprule
			
			Scheme 1& Scheme 2& W-MSR  &MSR \\[0.5ex]  
			\midrule
			
			$n\geq f+2$ & $n\geq f+2$ & $n\geq 2f+1$ & $n\geq 2f+1$ \\

			\bottomrule
	\end{tabular}}
	\label{table1}
	
	\vspace{-2pt}
\end{table}

\begin{table}[t]\scriptsize
	\begin{center}
		\caption{Differences between Schemes 1 and 2.}
		
		\vspace{0.15cm}
		\setlength{\tabcolsep}{2mm}{
			
			\begin{tabular}{ccc} 
				\toprule
				
				& Scheme 1& Scheme 2\\[0.5ex]  
				\midrule
				
				Network& Undirected& Directed\\
				\midrule

				\tabincell{c}{Distributed\\detection} & \tabincell{c}{With \\detection share}& \tabincell{c}{Fully distributed \\detection}\\
				
				\midrule
				
				\tabincell{c}{Malicious\\model} & $f$-total& \tabincell{c}{$f$-total\\$f$-local}\\

				\midrule
				
				\tabincell{c}{Graph\\condition} & \tabincell{c}{\includegraphics[width = 1.5cm]{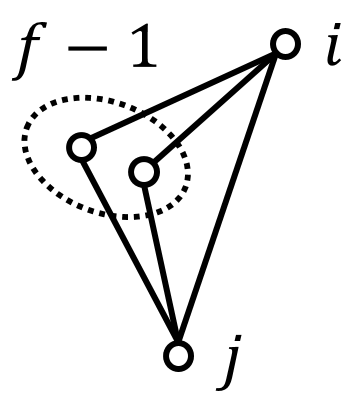}} & \tabincell{c}{\includegraphics[width = 2.6cm]{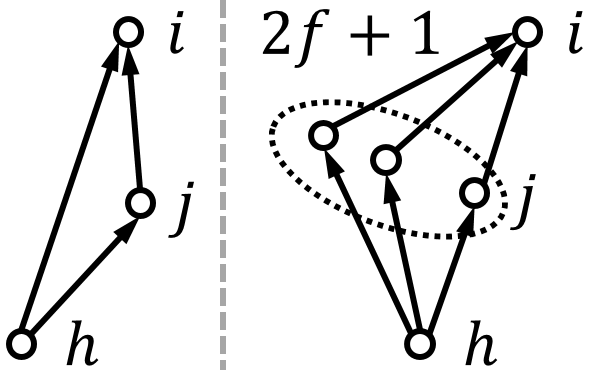}} \\
				
				\tabincell{c}{(detection)} & (a)& (b)\\
				
				\bottomrule
		\end{tabular}}
		\label{table2}
		
	\end{center}
	
	\vspace{-3pt}
	
\end{table}

We remark that for Scheme 2 under $f$-local model the tolerable number of malicious nodes in incomplete graphs could be more than the bound $n\geq 2f+1$. As mentioned in Remark \ref{flocal} and Corollary \ref{fullaccessnode}, full access nodes can detect any malicious node in the network. Thus in dense graphs which are close to complete graphs, Scheme~2 can function properly even when more than half of the nodes turn malicious. For example, in the 9-node incomplete network in Fig.~\ref{9nodes4malicious}, Scheme~2 performs well even when there are 6 malicious nodes. More details are discussed in the numerical examples.

\section{Numerical Examples}

In this section, we demonstrate the performance of
the proposed detection schemes through numerical examples.
We first use small-scale networks to verify the theoretical
results and then conduct extensive simulations based on
geometric random graphs of larger scale.

\subsection{Resilient Consensus under Scheme 1}

\begin{figure}[t]
	\centering
	
	\subfigure[]{
		\includegraphics[width=1.0in]{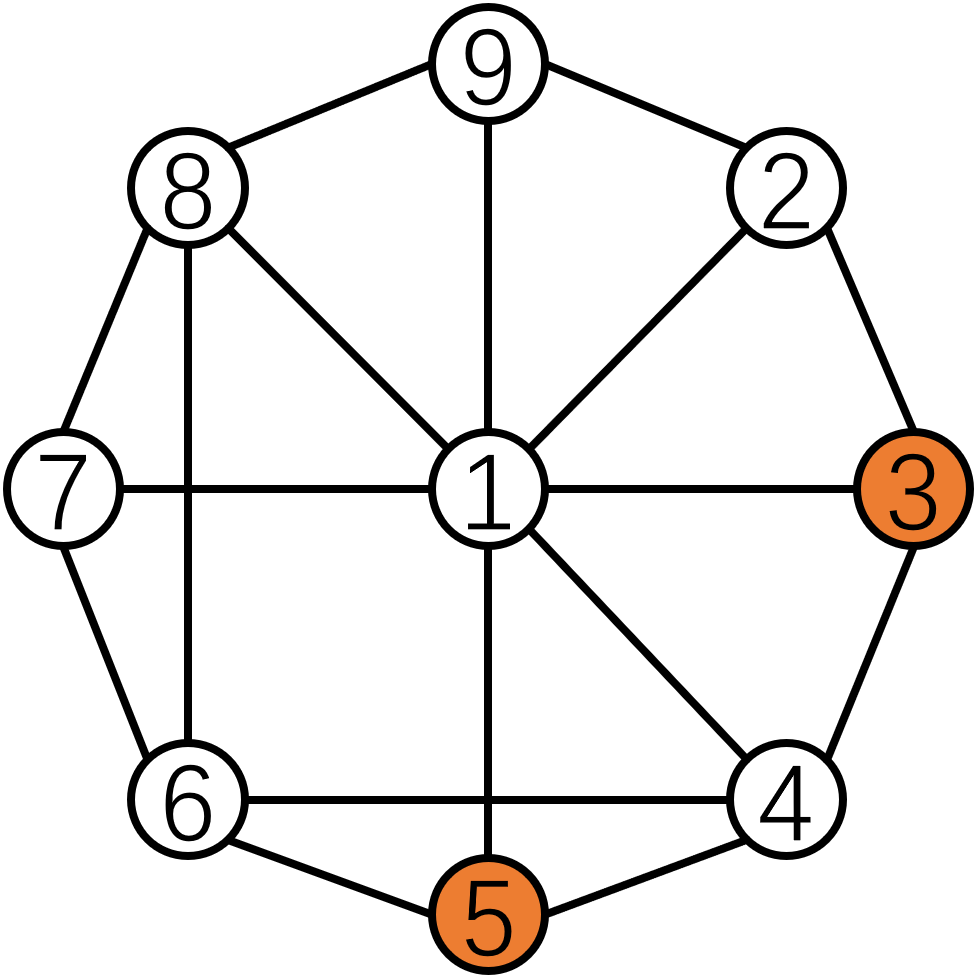}
		\label{2total}
	}
	\quad
	\vspace{-7pt}
	\subfigure[]{
		\includegraphics[width=1.0in]{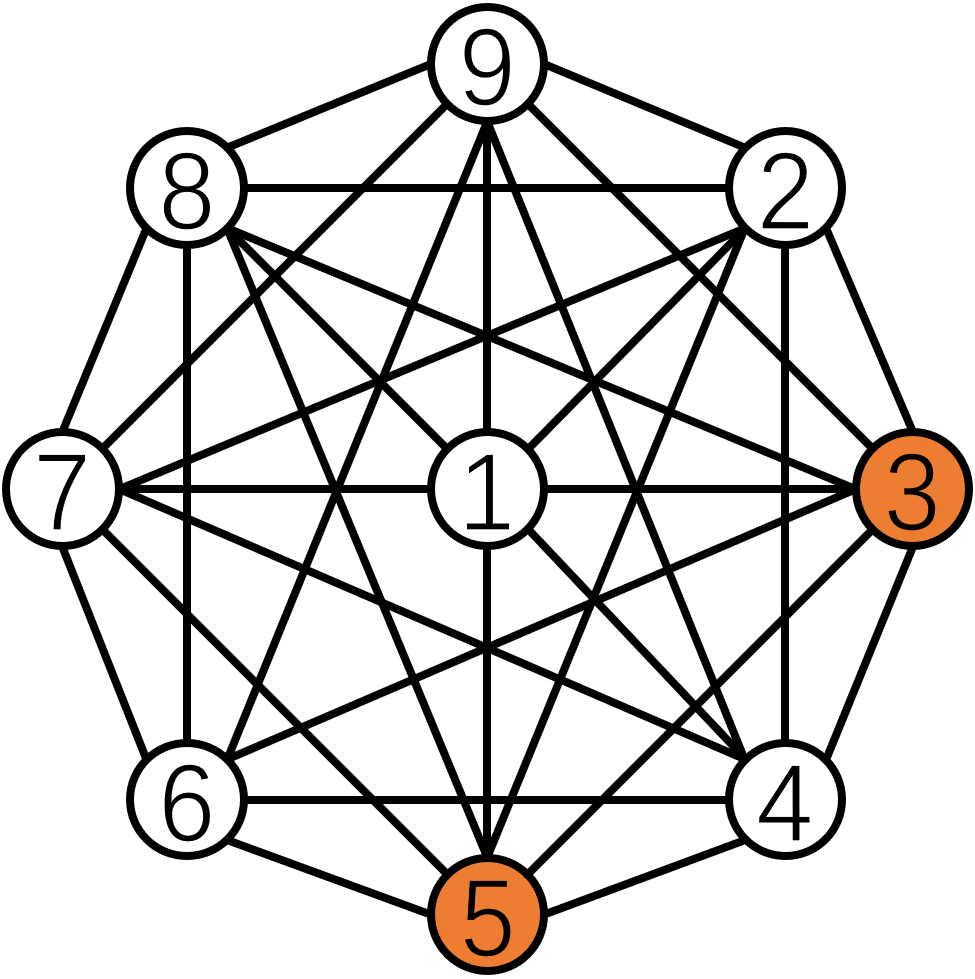}
		\label{2local}
	}
	\vspace{-4pt}
	\caption{9-node graphs under 2-total malicious model.}
	\label{graph9fors2}
\end{figure}

Consider the network shown in Fig.~\ref{4robust9}.
It is a 4-connected graph with at least two two-hop paths connecting every pair of neighbors.
Given these properties, Theorem~\ref{detect} indicates that Scheme~1 can detect and remove at most three malicious nodes, i.e., $f\leq 3$, and resilient consensus is guaranteed. 
Here, we set nodes~3, 5, and 6 to be malicious as indicated in orange in Fig.~\ref{4robust9}.

First, we examined the case without attacks. The time responses
of the states of all nodes arriving at consensus are shown in Fig.~\ref{fig4}. 
Next, in attack scenario 1, nodes~3 and~6 try to cooperate 
to avoid being detected. The simulation result is shown 
in Fig.~\ref{fig5}. 
By time $k=4$, consensus is almost achieved 
among normal nodes, but the malicious nodes start
to manipulate their information sets.
Specifically, node 3 changes the past value received from node 6 and similarly node 6 changes the past value received from node 3.
These attacks are quickly detected.  
We indicate the events of malicious node detections by dashed lines. Here, for instance, the dashed line at time 5 indicates that nodes 3 and 6 are detected as malicious.
In this case, we observe that Scheme 1 performs well. 
Finally, we note that for the W-MSR algorithm from
\cite{leblanc2013resilient}, the 4-connected network
in Fig.~\ref{4robust9} is not sufficient to achieve resilient consensus.


\subsection{Resilient Consensus under Scheme 2}

Next, we consider the network shown in Fig.~\ref{9nodes4malicious} with 5 in-coming edges of node~1 removed from the complete graph $\mathcal{K}_9$.
It satisfies the condition for Scheme~2 under $1$-local malicious model in Theorem \ref{detect2} for non-full access node~1. Here, we set nodes~2, 3, 4, 5, 6 and 7 to be malicious and the initial state to be $x[0]=[8\ 10\ 4\ 2\ 1\ 5\ 9\ 3\ 6]^T$.

First, we examined the case without attacks. The time responses
of the states of all nodes are shown in Fig.~\ref{fig14}. 
Next, in attack scenario 1, malicious nodes~2, 3, 5, 6 and 7 manipulate 
their own values; also, node~4 is malicious and keeps using the values received from them. The simulation result is shown 
in Fig.~\ref{fig15}. 
At time $k=3$, these attacks start, but are immediately detected at the next time step, with normal nodes not affected.
In both cases, the normal nodes achieve consensus.

Now, we discuss the applicability of the MSR algorithm 
from \cite{leblanc2013resilient} under
the same network. As discussed earlier,
for this algorithm, the connectivity structure of the network in Fig.~\ref{9nodes4malicious} 
is not sufficient for tolerating 6 malicious nodes. In fact, for a network with 9 nodes, even if it is a complete graph, only up to 4 malicious nodes can be tolerated (\cite{leblanc2013resilient}). In general, it is impossible for MSR algorithms to function properly when more than half of the nodes are malicious.
Moreover, we also analyze how the iterative approximate Byzantine consensus (IABC) algorithm from \cite{su2017reaching} performs under the same network. Consider the case when each node knows the topology of two-hop neighbors and the relay depth is two-hop, as we assume for Scheme~2 here. 
To meet the condition in \cite{su2017reaching}, node~8 should be connected with at least 9 nodes if we consider the 4-local case for node~8, but this is obviously not true in this network. Thus node~8 cannot make agreement with other normal nodes. This is because in \cite{su2017reaching}, the more adversarial class of Byzantine nodes is considered.

\begin{figure}[t]
	\centering
	
	\vspace{-10pt}
	\subfigure[\scriptsize{No attack.}]{
		\includegraphics[width=2.8in,height=1.0in]{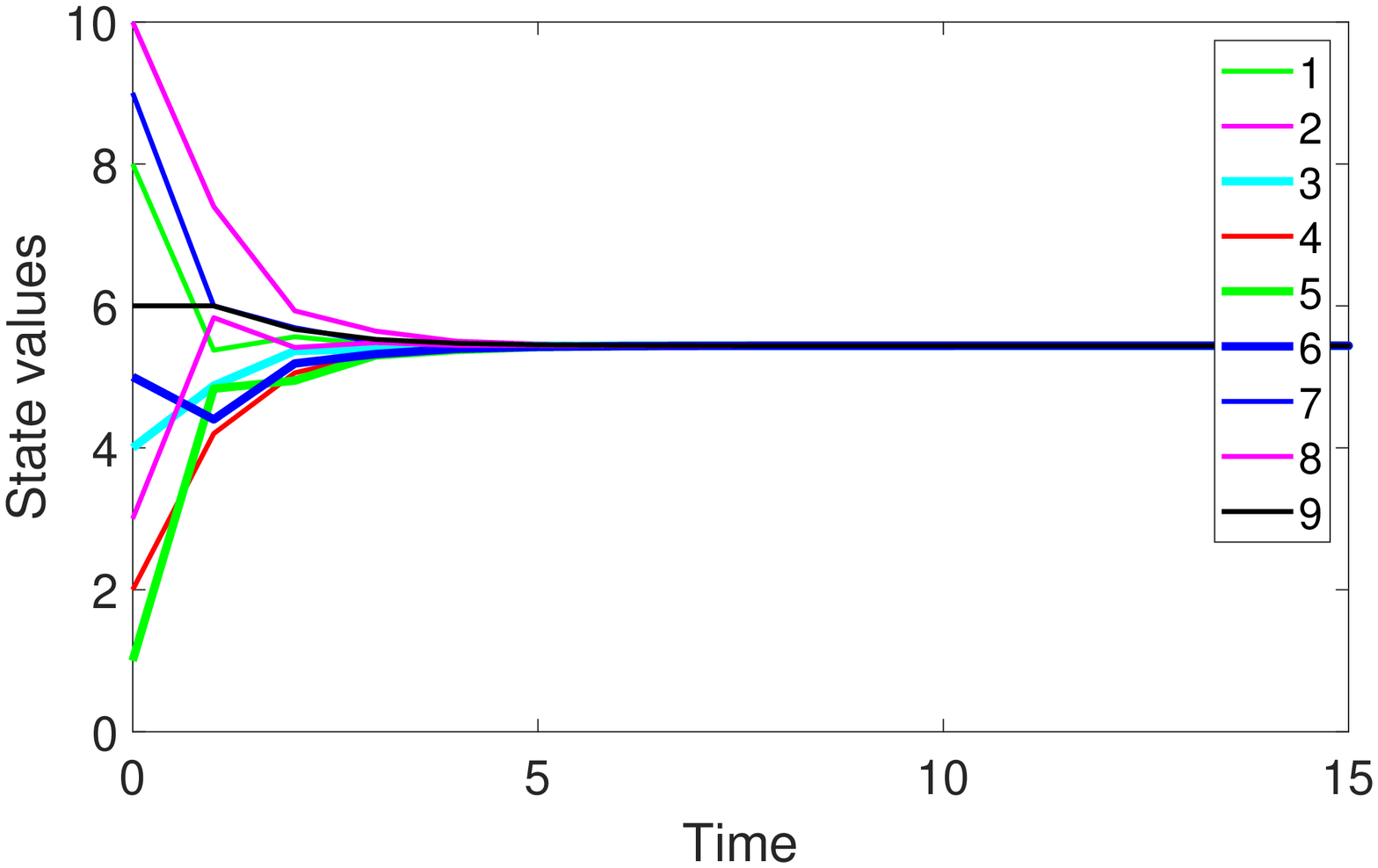}\label{fig4}
	}
	
	\vspace{-12pt}
	\subfigure[\scriptsize{Attack scenario 1.}]{
		\includegraphics[width=2.8in,height=1.0in]{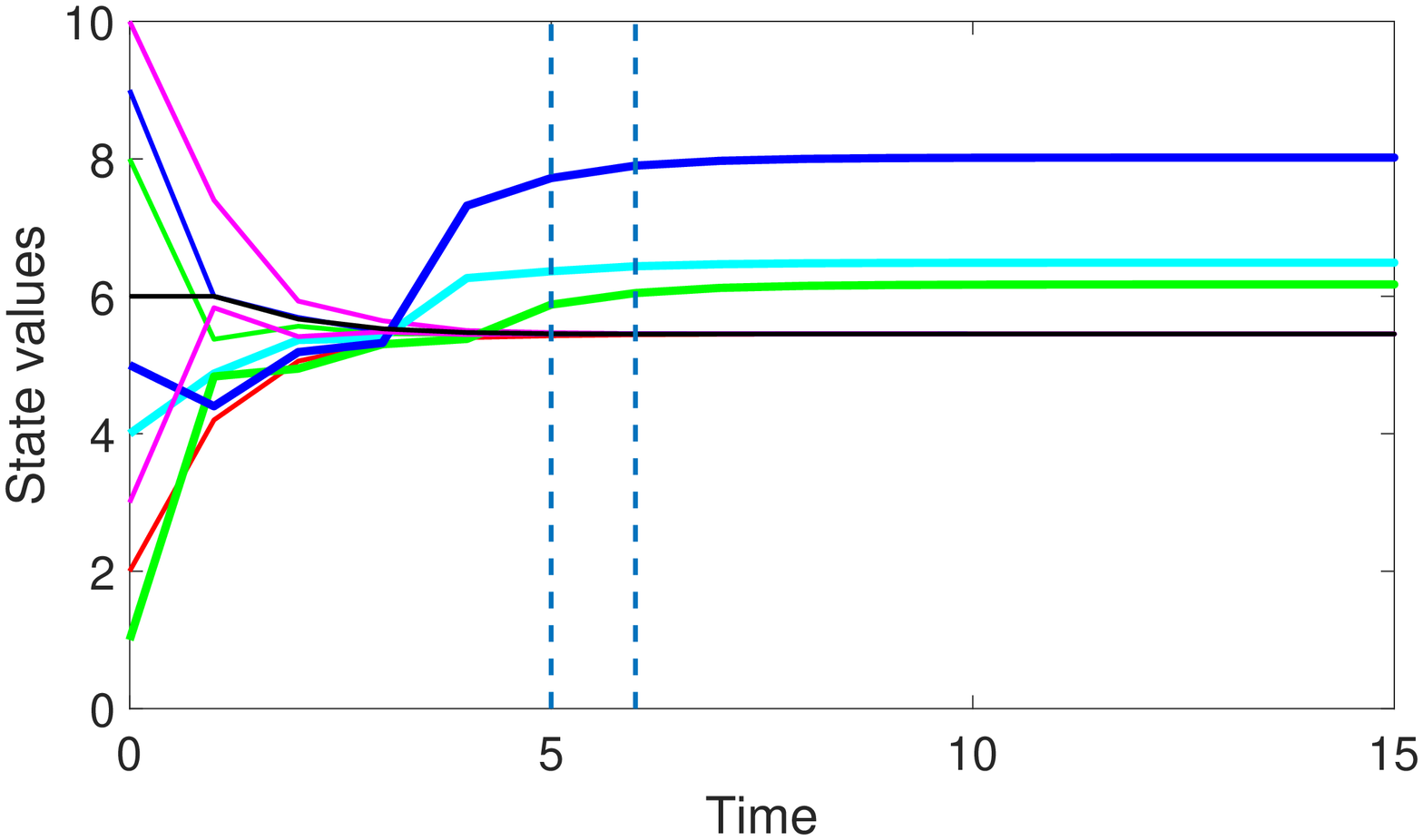}\label{fig5}
	}
	\vspace{-8pt}
	\caption{Scheme 1: Time responses of the states of all nodes.}
\end{figure}

\begin{figure}[t]
	\centering
	
	\vspace{-3pt}
	\includegraphics[width = 1.0in ]{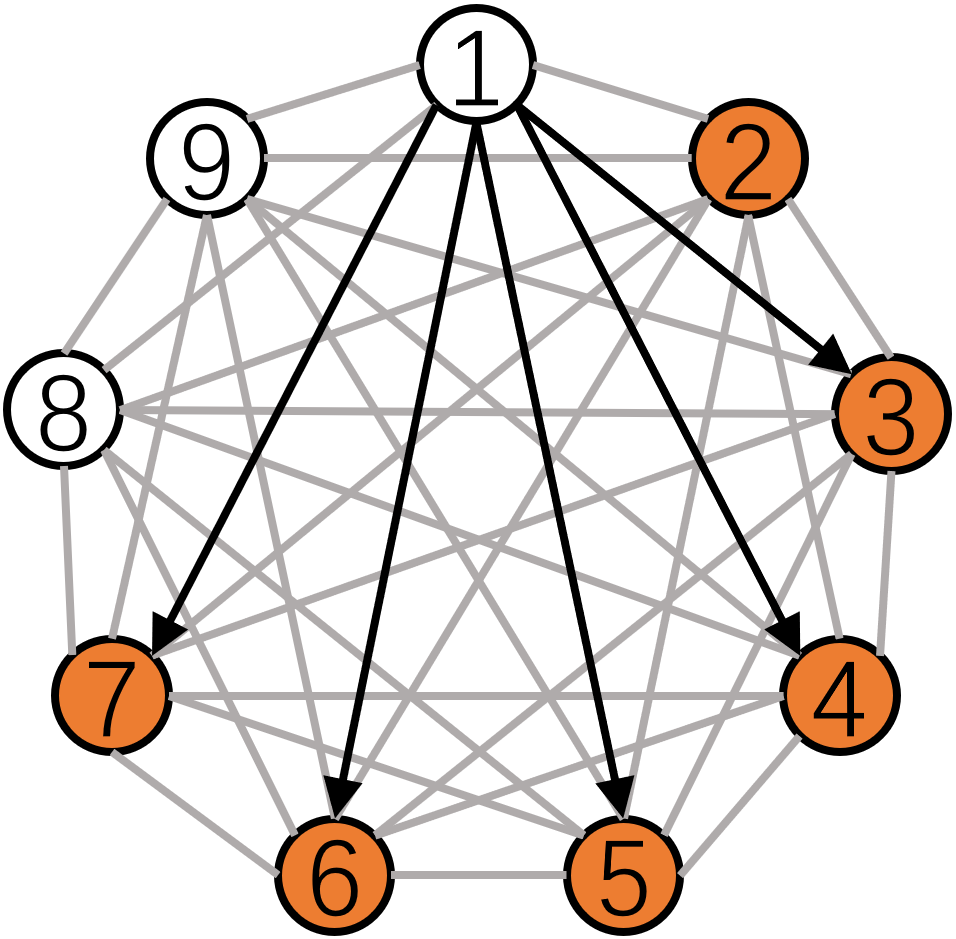}

	\vspace{-4pt}
	\caption{9-node network satisfying the criteria for Scheme~2.}\label{9nodes4malicious}
\end{figure}

\subsection{Application to Large Wireless Sensor Networks}


In this simulation, we create a WSN composed of 100 nodes. At first, we place them at random locations in a $100\times 100$ planar box. Each node can communicate only with the nodes located within the communication radius of $r$. 
Once $r$ is determined, a random geometric network is formed. By increasing communication radius $r$, the network becomes denser and eventually a complete network when $r\geq 122$.
After the network is formed, $f$ nodes are randomly selected to be malicious nodes satisfying our assumptions mentioned before. When we add more malicious nodes in the network, we keep the malicious nodes chosen before and turn normal nodes to new malicious nodes.
Then we apply the proposed schemes and the W-MSR algorithm to the network.

\begin{figure}[]
	\centering
	
	\vspace{-10pt}
	\subfigure[\scriptsize{No attack.}]{
		\includegraphics[width=2.8in,height=1.0in]{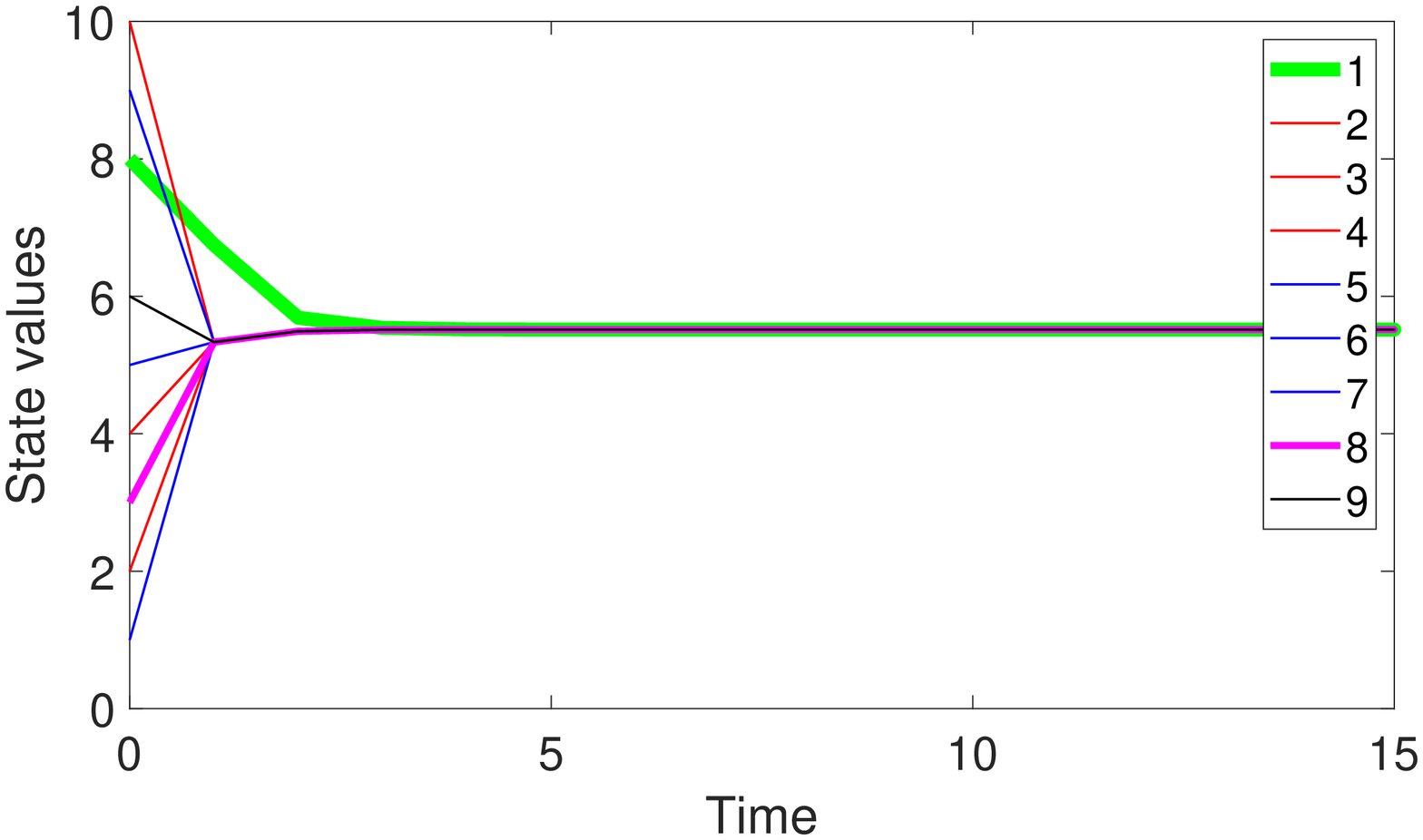}\label{fig14}
	}
	
	\vspace{-12pt}
	\subfigure[\scriptsize{Attack scenario 1.}]{
		\includegraphics[width=2.8in,height=1.0in]{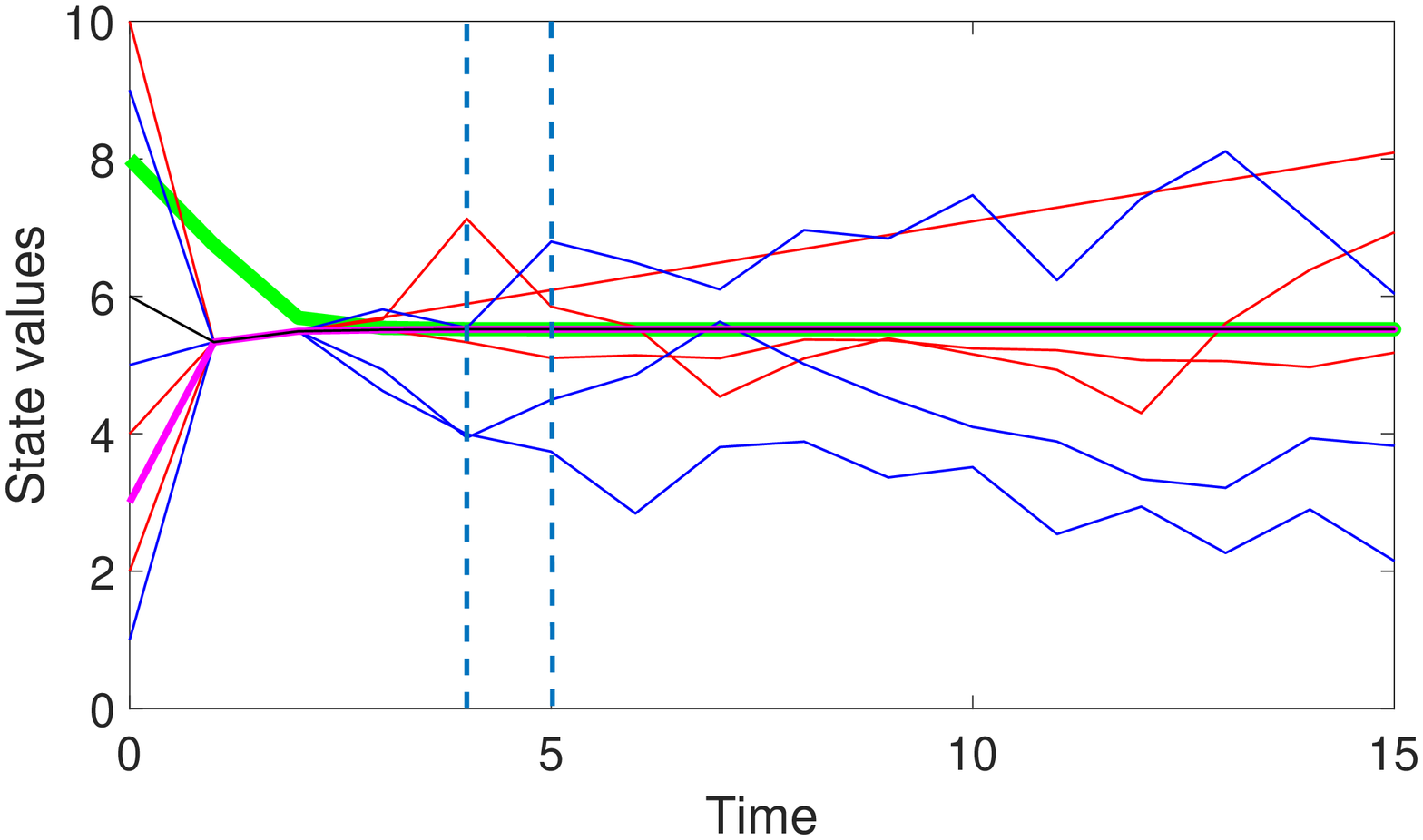}\label{fig15}
	}
	\vspace{-8pt}
	\caption{Scheme 2: Time responses of the states of all nodes.}
\end{figure}

Recall that the malicious nodes can manipulate their own information sets by manipulating either the current value or values from the last time step.
In particular, we consider the following two attack scenarios:
(i)~The first scenario is static in the sense that each malicious node takes a fixed value of $120$. (ii)~The second scenario is more dynamic as each malicious node randomly selects a neighbor and modifies the value from that neighbor arbitrarily and follows the update rule.

Here, we examine how the network connectivity affects the performance of the proposed resilient consensus schemes under the two attack scenarios. Using different values for the number of malicious agents $f$ and the communication radius $r$, we compare the following four algorithms:
(i)~The original consensus algorithm without adversaries,
(ii)~W-MSR algorithm, (iii)~Scheme~1, and (iv)~Scheme~2.
For each $f$ and $r$, we computed the success rate of each algorithm over 20 Monte Carlo runs with randomly chosen initial values of the normal agents in the interval [0,100].
The results of the consensus algorithm without adversaries provide the baseline, indicating when the network becomes connected.

The results under the first attack scenario are presented in Fig.~\ref{fixednode_ak1}. In the plots (a)--(c), we increased the number $f$ of adversaries. Notice that the two proposed schemes are clearly effective against the malicious nodes, and their success rates remain almost the same as the case without any adversaries, where the success rates become 1 around $r=20$. In contrast, the conventional W-MSR degrades in its performance as the number of malicious nodes increases.

The difference between the two proposed schemes becomes more evident under the second attack scenario. 
In Fig.~\ref{fixednode_ak2}, the results are shown as in Fig.~\ref{fixednode_ak1}. It is obvious that Scheme~1 is capable to reach resilient consensus similarly to the previous case under scenario~1.
However, under this scenario, Scheme~2 performs even worse than the W-MSR approach.
On the other hand, an interesting phenomenon can be observed for the case $f=60$ in Fig.~\ref{fixedf60_ak2}, where $f>n/2$ holds. Both Schemes 1 and 2 can guarantee resilient consensus when the network becomes complete while for the W-MSR algorithm, this is not possible. This verifies our analysis before in Section~\ref{Secfors2}.
We highlight again that among the two proposed algorithms, Scheme~2 is fully distributed and more scalable. Thus, there is a tradeoff between the requirements on network connectivities and computation resources.

\begin{figure}[t]
	\centering
	
	\vspace{-10pt}
	\subfigure[\scriptsize{$f=15$.}]{
		\includegraphics[width=3in,height=1.1in]{fixedf15}\label{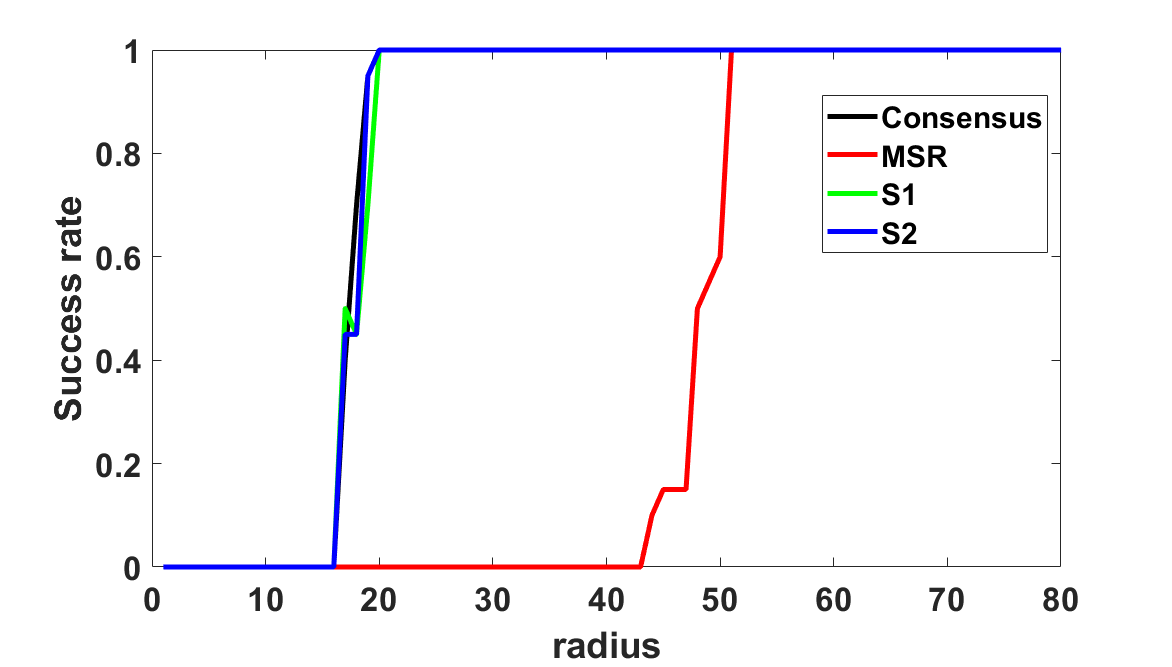}
	}
	
	\vspace{-12pt}
	\subfigure[\scriptsize{$f=30$.}]{
		\includegraphics[width=3in,height=1.0in]{fixedf30}\label{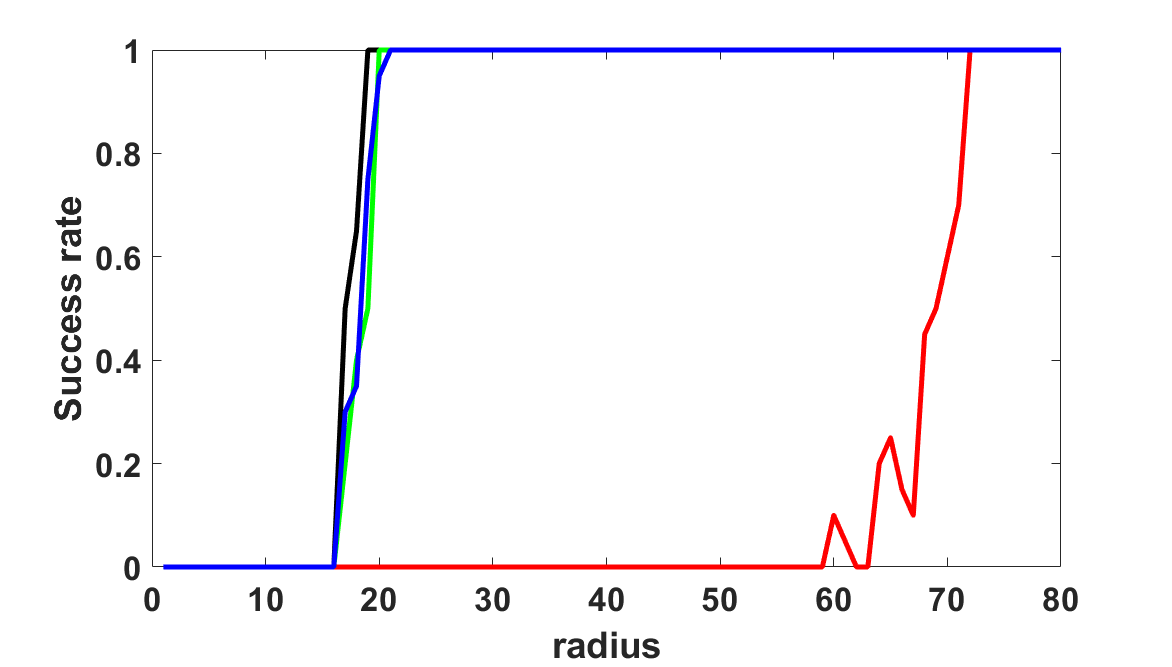}
	}

	\vspace{-12pt}
	\subfigure[$f=45$.]{
		\includegraphics[width=3in,height=1.0in]{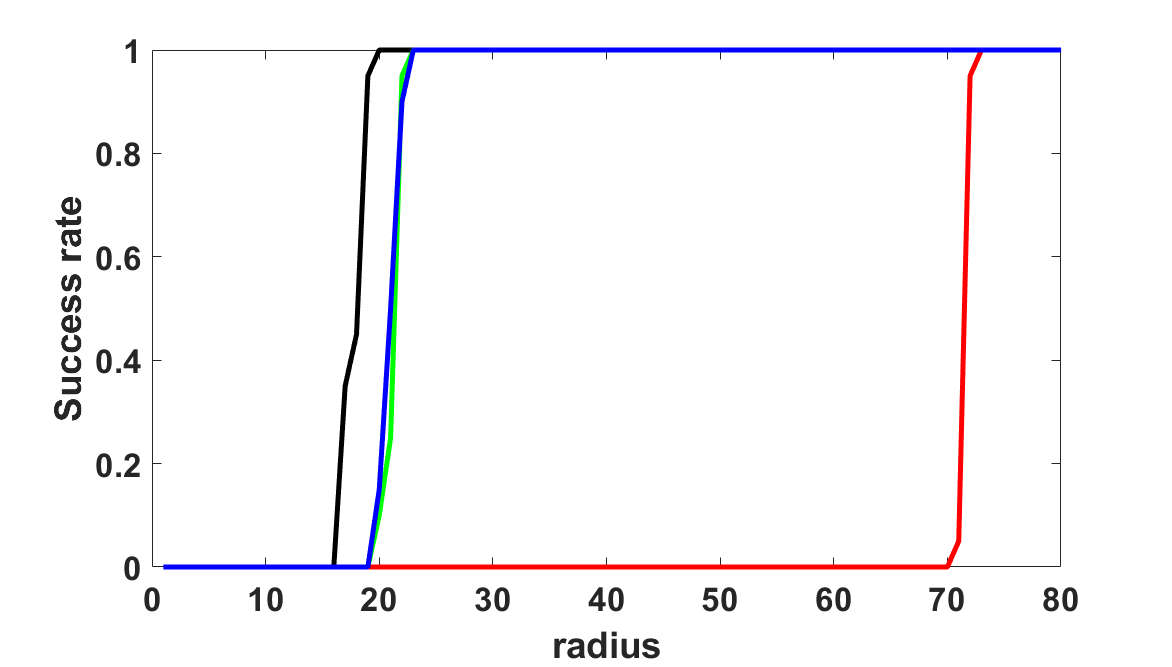}\label{fixedf45}
	}

	\vspace{-8pt}
	
	\caption{Performance of different resilient consensus algorithms under attack scenario 1.}\label{fixednode_ak1}
\end{figure}

We next check that the performance of Scheme~2 can be significantly improved by increasing the number of edges in the network.
Earlier in Section~\ref{Secfors2}, through example graphs in Fig.~\ref{graph9fors2}, we have seen such a property more analytically. Here, to increase edges, we introduce 16 additional relay nodes (\cite{zhang2007fault}) as discussed in Section \ref{discuss2}. Such a node has strong communication capabilities; it receives data from the nodes within their communication ranges of radius $r$ and then simply sends out, or relays, the received data to nodes within its own communication radius $r_\text{relay} = r+27$.
Relay nodes are located at the coordinates $(20x,20y), x, y=1,2,3,4$. In our setting, such nodes only relay information received from general nodes and not from other relay nodes; also they do not conduct any detection nor consensus algorithm.
These nodes bring in the same effects as introducing more directed edges in the network. In Fig.~\ref{fixednode_ak2}, the success rates of Scheme 2 and the W-MSR algorithm are indicated, respectively, by the blue and red dashed lines. One can observe that Scheme 2 performs almost the same as Scheme 1 and much better than the W-MSR algorithm especially when $f$ grows.

\begin{figure}[t]
	\centering
	
	\vspace{-10pt}
	\subfigure[\scriptsize{$f=15$.}]{
		\includegraphics[width=3in,height=1.1in]{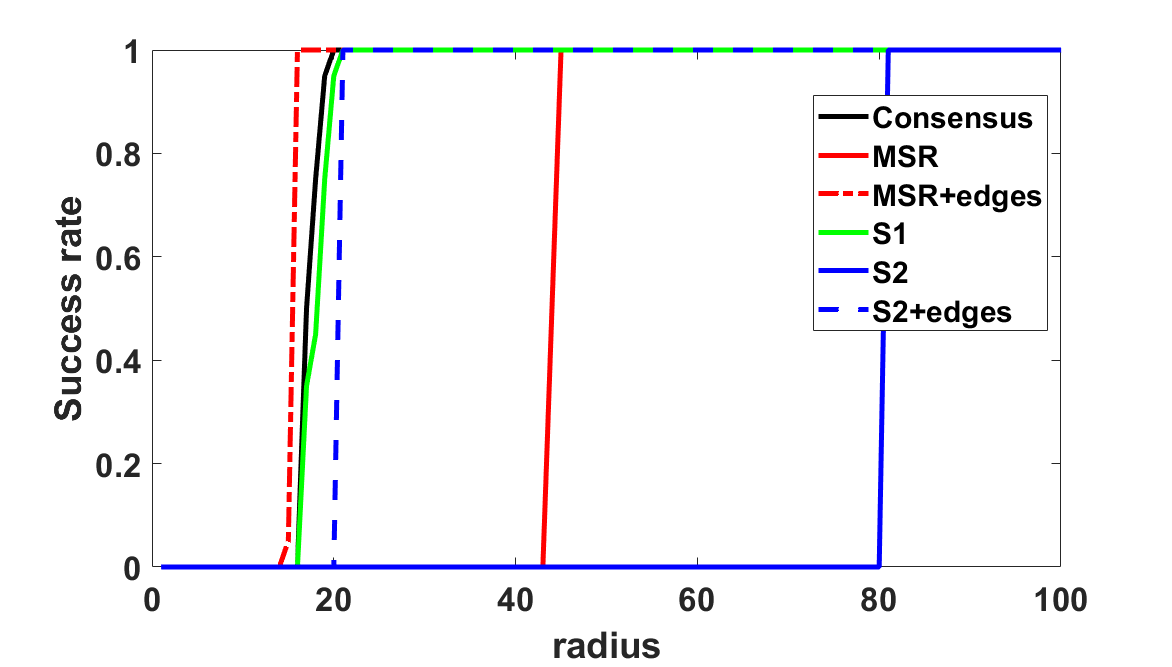}\label{fixedf15_ak2}
	}
	
	\vspace{-12pt}
	\subfigure[\scriptsize{$f=30$.}]{
		\includegraphics[width=3in,height=1.0in]{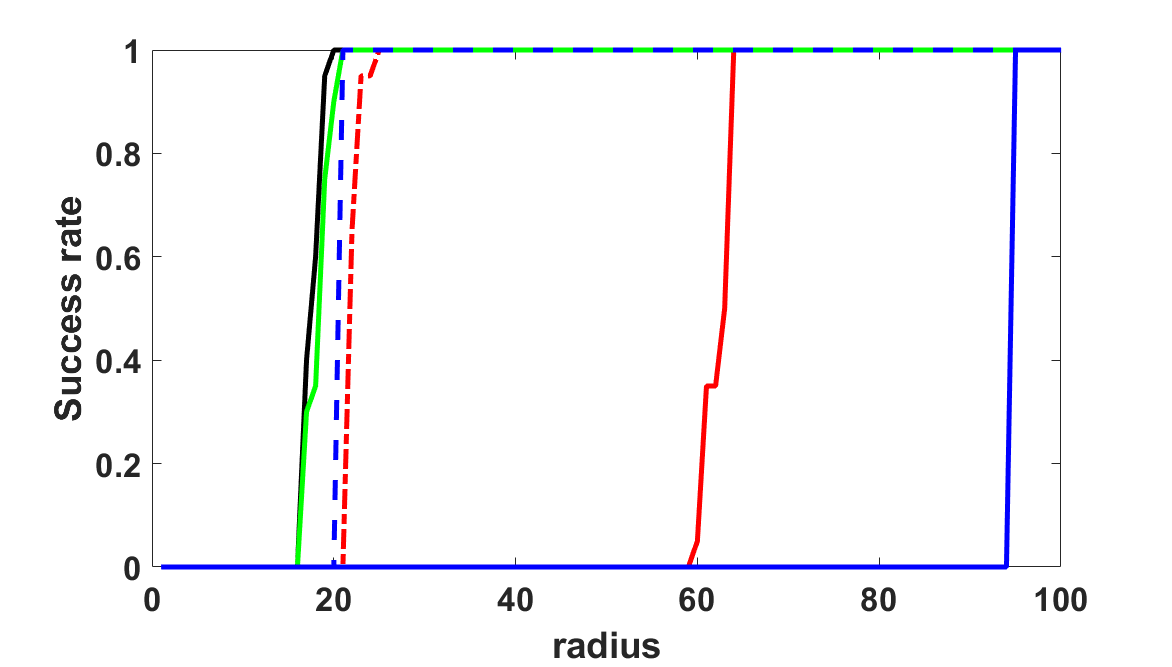}\label{fixedf30_ak2}
	}

	\vspace{-12pt}
	\subfigure[$f=45$.]{
		\includegraphics[width=3in,height=1.0in]{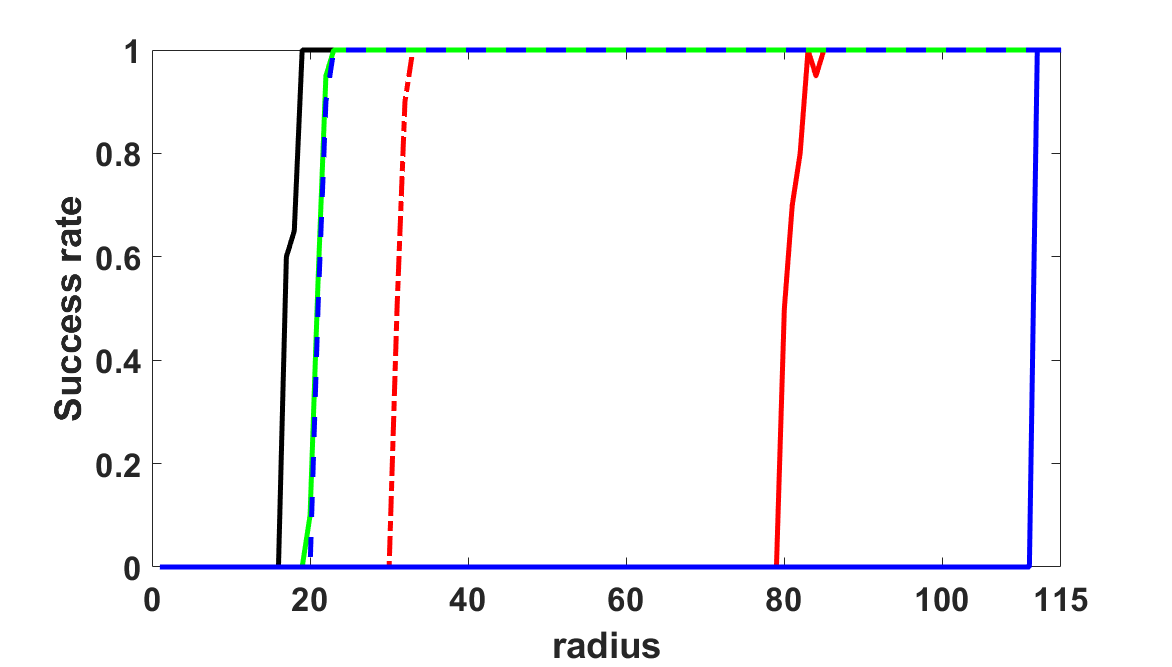}\label{fixedf45_ak2}
	}

	\vspace{-12pt}
	\subfigure[\scriptsize{$f=60$.}]{
		\includegraphics[width=3in,height=1.0in]{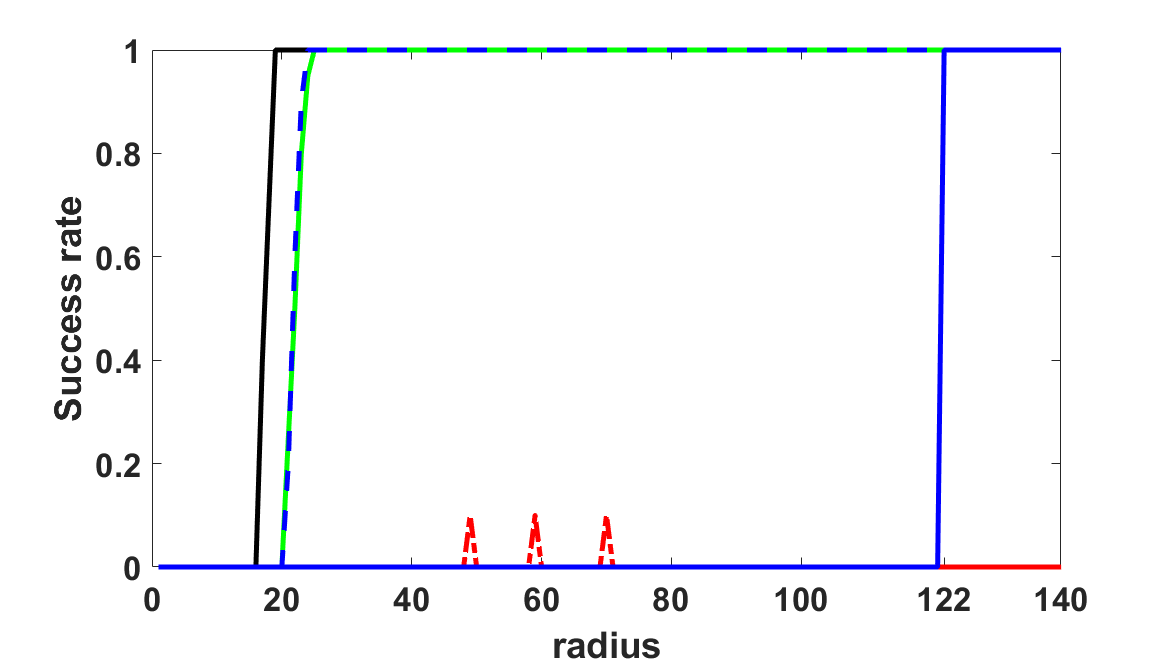}\label{fixedf60_ak2}
	}
	
	\vspace{-8pt}
	
	\caption{Performance of different resilient consensus algorithms under attack scenario 2.}\label{fixednode_ak2}
\end{figure}

%
%
%
%
%
%
%

\section{Conclusions}

In this paper, we have designed two novel distributed detection
schemes to solve the resilient consensus problem. The key features
of the schemes lie in the assumption on the adversaries based on
the malicious agent model and the use of two-hop communication among
the agents. We have clarified that the levels of network connectivities  
for both schemes can be more sparse compared to conventional approaches.
In the two schemes, the normal agents perform as detectors by monitoring
the behaviors of their neighbors, but they are different
in terms of distributed computation capabilities and the required
network connectivities. We have presented their properties through
extensive numerical examples. 

In future research, one possible direction is to apply two-hop communication techniques to enhance resilience in other
multi-agent coordination problems. 
Also, the extension of our results to multi-hop communication with more than two-hop is left for further work. We believe that if each node has more information about the multi-hop neighbors, we can reduce the requirement on the network structures and accelerate the consensus process.

\textit{Acknowledgement} The authors wish to thank Xavier D\'{e}fago for the helpful discussions and thank the reviewers for their valuable comments and suggestions.

\appendix
\section{Proof of Lemma \ref{lemma1} } \label{appendixa}

	\emph{Necessity:} 
	We show by contradiction. Suppose that malicious nodes $j$ and $l$ have no common normal neighbors. Then there is no normal node knowing if node $j$ changes $x_l^{(j)}[k-1|k]$ in $\Phi_j[k]$ or not.
	Node $l$ can make similar manipulations. Thus nodes $j$ and $l$ cannot be detected.
	
	\emph{Sufficiency:} Without loss of generality, consider the pair of malicious nodes $j$ and $l$ and their common neighbor normal node $i$. Three nodes form a triangular subgraph.
	Node $j$ will be detected by node $i$ if it changes $x_j^{(j)}[k|k]$ in $ \Phi_j[k]$ by breaking the update rule or changes $x_i^{(j)}[k-1|k]$ in $ \Phi_j[k]$. Moreover, it will still be detected by node $i$ if it changes $x_j^{(j)}[k-1|k],x_l^{(j)}[k-1|k]$ in $\Phi_j[k]$ since node $i$ records $x_j^{(j)}[k-1|k-1]$ in $\Phi_j[k-1]$ and $x_l^{(l)}[k-1|k-1]$ in $\Phi_l[k-1]$. We can do the same analysis for malicious node $l$. For the case that node $j$ has independent normal neighbors other than the common neighbors of nodes $j$ and $l$, it will be detected by the normal neighbors if the corresponding values from the normal neighbors in $\Phi_j[k]$ are changed. Based on the assumption, any manipulation on the information sets of the malicious nodes under the common neighbor condition will be detected. \hfill  $\blacksquare$

\section{Proof of Theorem \ref{detect}} \label{appendixb}        
           
            \emph{(a) Necessity:} 
           	Suppose that there is a pair of neighboring nodes, but they have at most $f-2$ two-hop paths connecting them, i.e., they have $f-2$ common neighbors. In this case, we can take the pair of nodes as well as all of their common neighbors to be malicious. Then no normal node is a common neighbor of the pair of malicious nodes. According to Lemma~\ref{lemma1}, the cooperation between the malicious nodes will not be detected by any of the normal ones in the network.
           	
           	\emph{Sufficiency:} 
           	By the assumption on the connectivity of $\mathcal{G}$, every malicious node has at least one neighbor. We consider the following two cases separately: A malicious node $i$ has (i) only normal neighbors and (ii) one or more malicious neighbors.
           	We will show that in each case, the malicious node $i$ will be detected by one of its normal neighbors. At that point, by Assumption \ref{broadcast}, all normal nodes will be informed of node $i$ being malicious. 
           	
           	(i) When malicious node $i$ has only normal neighboring nodes, denote them by nodes 1, 2, \dots, $d_i$. Node $i$ will be detected by any node $j$ among its neighbors if it changes $x_i^{(i)}[k|k] $ in $ \Phi_i[k]$ by breaking the update rule \eqref{updaterule}. Moreover, it can manipulate $x_i^{(i)}[k|k] $ in $ \Phi_i[k]$ through changing some of the data in $\Phi_i[k]$, which still satisfies the update rule. According to the assumptions, it can change or delete its own previous value $x_i^{(i)}[k-1|k]$ or any of the values of $x_1^{(i)}[k-1|k],x_2^{(i)}[k-1|k],...,x_{d_i}^{(i)}[k-1|k]$. Any such changes in the information set will be detected by its normal neighbors whose values are changed or deleted since the same information set is broadcast.
           	
           	(ii) When malicious node $i$ has one or more malicious nodes as its neighbors, we prove that the condition in Lemma~\ref{lemma1} is satisfied. Take one malicious neighbor of node $i$. By assumption, this pair of nodes have at least $f-1$ common neighbors. Since in the $f$-total model, there are at most $f$ malicious nodes in the network, at most $f-2$ neighbors can be malicious. Hence, among the $f-1$ common neighbors, one or more must be normal. Thus, the condition in Lemma~\ref{lemma1} is satisfied. This then implies 
           	that the pair of malicious nodes will be detected by this normal node.
           	
           	Therefore, in both cases (i) and (ii), malicious nodes are detected. 
           	
           	\emph{(b)} Malicious nodes will be detected immediately once they misbehave. Thus misbehaviors of malicious nodes cannot affect normal nodes since normal nodes exclude values from detected malicious nodes. The safety condition is thus guaranteed. Furthermore, if $\mathcal{G}$ is ($f+1$)-connected, even after removing $f$ malicious nodes from the network, the normal nodes are still connected. Therefore, resilient consensus is achieved.  \hfill  $\blacksquare$

\section{Proof of Proposition \ref{minimumgraph}} \label{appendixc}

	Consider the case where the graph is constructed from a complete graph $\mathcal{K}_n$ with edge $(j,i)$ removed. Such a graph has ($n-1$)-connected rooted spanning trees. Moreover, nodes other than nodes $i$ and $j$ are full access nodes that can, by Corollary \ref{fullaccessnode}, detect any malicious nodes in the network. Thus we only focus on nodes $i$ and $j$. Note that there are $n-2$ directed two-hop paths from $j$ to $i$. For node $i$ to obtain the true value of node $j$, the majority of these paths should contain normal nodes as $l$ in Theorem \ref{detect2}. For the network to tolerate more malicious nodes, the extreme case is that node $j$ is malicious. Thus we have
	$(n-2)-(f-1)  >  \left \lfloor  \frac{n-2}{2} \right \rfloor   \Longleftrightarrow n-\left\lfloor \frac{n}{2} \right\rfloor  >   f$. 
	Note that $-\left\lfloor n/2 \right\rfloor=\left\lceil -n/2 \right\rceil$ and $n=\left\lceil n \right\rceil$, and hence $\left\lceil n/2 \right\rceil>f$.
	When $n$ is even, we have 
	$\frac{n}{2} >  f 	\Longleftrightarrow	n  >  2f $.
	When $n$ is odd, we have 
	$\frac{n+1}{2} >  f 	\Longleftrightarrow	n \geq 2f+1 >2f$.
	After $f$ malicious nodes are detected and removed, where $f\leq n-2$, the graph has at least one rooted spanning tree, and thus resilient consensus is guaranteed. \hfill  $\blacksquare$

\fontsize{8pt}{8.5pt}\selectfont

\end{document}